\newcommand{\nats}{\mathbb N}
\newcommand{\x}{\mathbf{x}}
\newcommand{\y}{\mathbf{y}}
\newcommand{\row}{\mathbf{r}}
\newcommand{\col}{\mathbf{c}}
\newcommand{\G}{\mathcal{G}}
\DeclareMathOperator{\supp}{Supp}
\DeclareMathOperator{\pr}{Pr}
\begin{document}

\markboth{J. Fearnley and R. Savani}{Finding Approximate Nash Equilibria of
Bimatrix Games via Payoff Queries}

\title{Finding Approximate Nash Equilibria of Bimatrix Games via Payoff Queries}

\author{John Fearnley 
\affil{University of Liverpool}
Rahul Savani
\affil{University of Liverpool}
}

\begin{abstract}
We study the deterministic and randomized query complexity of finding
approximate equilibria in bimatrix games. We show that the deterministic query
complexity of finding an $\epsilon$-Nash equilibrium when $\epsilon <
\frac{1}{2}$ is $\Omega(k^2)$, even in zero-one constant-sum games. In
combination with previous results~\cite{FGGS13}, this provides a complete
characterization of the deterministic query complexity of approximate Nash
equilibria. We also study randomized querying algorithms. We give a randomized
algorithm for finding a $(\frac{3 - \sqrt{5}}{2} + \epsilon)$-Nash equilibrium
using $O(\frac{k \cdot \log k}{\epsilon^2})$ payoff queries, which shows that
the $\frac{1}{2}$ barrier for deterministic algorithms can be broken by
randomization. For well-supported Nash equilibria (WSNE), we first give a
randomized algorithm for finding an $\epsilon$-WSNE of a zero-sum bimatrix game
using $O(\frac{k \cdot \log k}{\epsilon^4})$ payoff queries, and we then use
this to obtain a randomized algorithm for finding a $(\frac{2}{3} +
\epsilon)$-WSNE in a general bimatrix game using $O(\frac{k \cdot \log
k}{\epsilon^4})$ payoff queries. Finally, we initiate the study of lower bounds
against randomized algorithms in the context of bimatrix games, by showing that
randomized algorithms require $\Omega(k^2)$ payoff queries in order to find a
$\frac{1}{6k}$-Nash equilibrium, even in zero-one constant-sum games. In
particular, this rules out query-efficient randomized algorithms for finding
exact Nash equilibria.
\end{abstract}

\category{F.2.2}{Analysis of Algorithms and Problem Complexity}{Nonnumerical
Algorithms and Problems---Computations on discrete structures.}


\keywords{Payoff query complexity, bimatrix game, approximate Nash equilibrium,
randomized algorithms.}

\acmformat{J. Fearnley and R. Savani, 2014. Finding Approximate Nash Equilibria
of Bimatrix Games via Payoff Queries.}

\begin{bottomstuff}
This work is supported by EPSRC grant EP/H046623/1 ``Synthesis and Verification
in Markov Game Structures'' and EPSRC grant EP/L011018/1 ``Algorithms for
Finding Approximate Nash Equilibria.''
\end{bottomstuff}

\maketitle

\section{Introduction}

The Nash equilibrium is the central solution concept in game theory, which makes
algorithms for finding Nash equilibria an important topic. A recent strand of
work~\cite{B13b, BB13, FGGS13, GR13, HN13} has studied this problem from the
perspective of \emph{payoff query complexity}. 

The payoff query model is motivated by practical applications of game theory. In
many practical applications it is often the case that we know that there is a
game to be solved, but we do not know what the payoffs are. In order to discover
the payoffs, we would have to play the game, and hence discover the payoffs
through experimentation. This may be quite costly, so it is natural
to ask whether we can find a Nash equilibrium of a game while minimising the
number of experiments that we must perform.

Payoff queries model this situation. In the payoff query model we are told the
structure of the game, ie.\ the strategy space, but we are not told the
payoffs. We are permitted to make payoff queries, where we propose a pure
strategy profile, and we are told the payoff to each player under that strategy
profile. Our task is to compute an equilibrium of the game while minimising the
number of payoff queries that we make.

This paper studies the payoff query complexity of finding equilibria in bimatrix
games. Previous work has shown that the deterministic query complexity of
finding an \emph{exact} Nash equilibrium in a $k \times k$ bimatrix game is
$k^2$, even for zero-one constant-sum games~\cite{FGGS13}. In other words, one
cannot hope to find an exact Nash equilibrium without querying all pure strategy
profiles. Since query-efficient algorithms for exact equilibria do not exist,
this naturally raises the question: what is the payoff query complexity of
finding \emph{approximate} equilibria?

There are two competing notions of approximate equilibrium for bimatrix games.
An exact Nash equilibrium requires that all players achieve their best-response
payoff, and thus have no incentive to deviate. An \emph{$\epsilon$-Nash
equilibrium} requires that each player receives a payoff that is within
$\epsilon$ of their best-response payoff, and thus all players have only a
small incentive to deviate. One problem with this
definition is that an $\epsilon$-Nash equilibrium permits a player to place a
small amount of probability on a very bad strategy, and it is questionable
whether a rational player would actually want to do this. An
\emph{$\epsilon$-well supported Nash equilibrium}  ($\epsilon$-WSNE) rectifies this, by
requiring that all players only place probability on strategies that are within
$\epsilon$ of being a best-response.

Previous work by Fearnley, Gairing, Goldberg, and Savani~\cite{FGGS13} has shown
that an algorithm of Daskalakis, Mehta and Papadimitriou~\cite{DMP09} can be
adapted to produce a deterministic algorithm that finds a $\frac{1}{2}$-Nash
equilibrium using $2k - 1$ payoff queries. The same paper also shows that, for
all $i$ in the range $2 \le i < k$, the deterministic payoff query complexity of
finding a $(1 - \frac{1}{i})$-Nash equilibrium is at least $k - i +1$.
Note, in particular, that
this implies that for every constant $\epsilon$ in the range $\frac{1}{2} \le
\epsilon < 1$, the deterministic query complexity of finding an $\epsilon$-Nash
equilibrium is $\Theta(k)$.

However, relatively little is known about the more interesting case of $\epsilon
< \frac{1}{2}$ . A lower bound of $\Omega(k \cdot \log k)$ has been shown
against deterministic algorithms that find a $O(\frac{1}{\log k})$-Nash
equilibrium~\cite{FGGS13}. For the special case of zero-sum games, Goldberg and
Roth have
shown that an $\epsilon$-Nash equilibrium of a zero-sum
bimatrix game can be found by a \emph{randomized} algorithm that uses $O(\frac{k
\cdot \log k}{\epsilon^2})$ many payoff queries~\cite{GR13}.

\paragraph{Our contribution}

As we have mentioned, so far relatively little is known for the deterministic
query complexity of finding an $\epsilon$-Nash equilibrium when $\epsilon <
\frac{1}{2}$. We address this with a lower bound: in
Section~\ref{sec:epsdlower} we show that, for every $\epsilon > 0$, the
deterministic payoff query complexity of finding a $(\frac{1}{2} -
\epsilon)$-Nash equilibrium in a $k \times k$ bimatrix game is $\Omega(k^2)$.
Our lower bound holds even for zero-one constant-sum games. 
When combined with
previous results, this provides a complete characterization of the
deterministic query complexity of $\epsilon$-Nash equilibria when $\epsilon$ is
constant: it is $\Theta(k)$ for $\epsilon \ge \frac{1}{2}$, and $\Theta(k^2)$
for $\epsilon < \frac{1}{2}$.

Since our lower bound rules out deterministic query-efficient algorithms for
finding $(\frac{1}{2} - \epsilon)$-Nash equilibria, it is natural to ask whether
this threshold can be broken through the use of randomization. In
Section~\ref{sec:enbm} we give a positive answer to this question. Our approach
is to take an algorithm of Bosse, Byrka, and Markakis~\cite{BBM10}, and to apply
the randomized algorithm of Goldberg and Roth~\cite{GR13} in order to solve the
zero-sum game used by the BBM algorithm. We show that this produces a randomized
algorithm that uses $O(\frac{k \cdot \log k}{\epsilon^2})$ payoff queries and
finds a $(\frac{3 - \sqrt{5}}{2} + \epsilon)$-Nash equilibrium in a $k \times k$
bimatrix game. Since $\frac{3 - \sqrt{5}}{2} \approx 0.382$, this shows that
randomization can be used to defeat the barrier at $0.5$ that exists for
deterministic algorithms.

In Section~\ref{sec:wsnebm}, we turn our attention to approximate well-supported
Nash equilibria, which is a topic that has not previously been studied from the
payoff query perspective in the context of bimatrix games. We adapt the
algorithm of Kontogiannis and Spirakis for finding a $\frac{2}{3}$-WSNE in a
bimatrix game~\cite{KS10}, and in doing so we obtain a randomized algorithm for
finding a $(\frac{2}{3} + \epsilon)$-WSNE using $O(\frac{k \cdot \log
k}{\epsilon^4})$ many payoff queries. Note that this almost matches the best
known polynomial-time algorithm for finding approximate well-supported Nash
equilibria: Fearnley, Goldberg, Savani, and S\o rensen~\cite{FGSS12} have given
a polynomial time algorithm that finds a $(\frac{2}{3} - 0.004735)$-WSNE.

As part of the proof of this result, we prove an interesting side result: there
is a randomized algorithm that finds an $\epsilon$-WSNE of a zero-sum game using
$O(\frac{k \cdot \log k}{\epsilon^4})$ payoff queries. This complements the
result of Goldberg and Roth for finding $\epsilon$-Nash equilibria in zero-sum
games, and potentially could find applications elsewhere.

Finally, we initiate the study of lower bounds against randomized algorithms in
the context of bimatrix games. In Section~\ref{sec:rlower} we show that the
randomized query complexity of finding a $\frac{1}{6k}$-Nash equilibrium in a $k
\times k$ bimatrix game is $\Omega(k^2)$.
Again, our lower bound holds even for zero-one constant-sum games.
Note, in particular, that our lower bound rules out the existence of query-efficient
randomized algorithms for finding exact equilibria. This improves over earlier
results, which only considered deterministic algorithms for finding exact Nash
equilibria~\cite{FGGS13}.

\paragraph{Related work}

Apart from the related work on bimatrix games that we have already mentioned,
there have been a number of results on payoff query complexity in the context of
$n$-player strategic form games.
Recently, Babichenko has shown that there is a constant (but small) $\epsilon$
for which the randomized query complexity of finding an
$\epsilon$-well-supported Nash equilibrium in an $n$-player strategic form is
exponential in~$n$~\cite{B13b}.

The query complexity of finding approximate-correlated equilibria has been
studied in a pair of papers~\cite{BB13,HN13}, where it was shown that a
randomized algorithm can find approximate-correlated equilibria in polynomial
time, but that non-random algorithms require exponentially many queries, and
that finding an exact equilibrium requires exponentially many queries. 

Finally, Goldberg and Roth have given a randomized algorithm that uses
logarithmically many payoff queries in order to find approximate correlated
equilibria in binary-action $n$-player strategic form games, and they also give
a matching lower bound~\cite{GR13}. The same paper shows a linear lower bound
for finding well-supported approximate correlated Nash equilibria.

Of course, there has been much previous work studying approximate Nash
equilibria from the computational complexity point of
view~\cite{BBM10,DMP07,DMP09,KPS09,TS08}, some of which we have drawn upon for
our query complexity results. So far, the best polynomial-time algorithm for
finding $\epsilon$-Nash equilibria was given by Tsaknakis and Spirakis, who
showed that a $0.3393$-Nash equilibrium can be found in
polynomial-time~\cite{TS08}. For
well-supported approximate Nash equilibria, the first result on the subject gave
an algorithm for finding a $\frac{5}{6}$-WSNE in polynomial time~\cite{DMP09},
but this only holds if a certain unproved graph-theoretic conjecture is true.
The best result until recently was by Kontogiannis and Spirakis, who gave an
algorithm for finding $\frac{2}{3}$-WSNE in polynomial time. The current best
known algorithm was given by Fearnley, Goldberg, Savani, and S\o
rensen~\cite{FGSS12} who, in a slight improvement over previous work, produced a
polynomial-time algorithm for finding a $(\frac{2}{3} - 0.004735)$-WSNE. 

There has also been a line of work studying the support size requirements for
approximate Nash equilibria.
It has been shown that every game has a $\frac{1}{2}$-Nash equilibrium with
support size $2$~\cite{DMP09}, but logarithmic support sizes are both
necessary~\cite{FNS07} and sufficient~\cite{A94,LMM03} for $\epsilon$-Nash
equilibria with $\epsilon < \frac{1}{2}$. The threshold of $\frac{1}{2}$, of
course, also appears in our work on the deterministic query complexity of
approximate Nash equilibria.

A similar support-size threshold may exist for well-supported Nash equilibria at
$\frac{2}{3}$: it has been shown that $\epsilon$-WSNE with $\epsilon <
\frac{2}{3}$ require super-constant support sizes~\cite{ANSV13}, whereas an
as yet unproved graph theoretic conjecture would imply that every game has a
$\frac{2}{3}$-WSNE with support size 3~\cite{DMP09}.





\section{Preliminaries}
\label{sec:prelim}

\paragraph{Games and Strategies}

A $k \times k$ \emph{bimatrix game} is a pair $(R, C)$ of two $k \times k$
matrices: $R$ gives payoffs for the \emph{row player}, and $C$ gives payoffs for
the \emph{column player}. We make the standard assumption that all payoffs lie
in the range $[0, 1]$. For each $n \in \nats$, we use $[n]$ to denote the set
$\{1, 2, \dots, n\}$. Each player has $k$ \emph{pure strategies}. To play the
game, both players simultaneously select a pure strategy: the row player selects
a row $i \in [k]$, and the column player selects a column $j \in [k]$. The row
player then receives payoff $R_{i, j}$, and the column player receives payoff
$C_{i, j}$. We say that a bimatrix game $(R, C)$ is a \emph{zero-one} game, if
all entries of $R$ and $C$ are either $0$ or $1$. We say that $(R, C)$ is
\emph{constant-sum} if there is a constant $c$ such that $R_{i,j} + C_{i,j} =
c$, for all $i,j \in [k]$. 

A \emph{mixed strategy} is a probability distribution over $[k]$. We denote
a mixed strategy for the row player as a row vector $\x$ of length $k$, such
that $\x_i$ is the probability that the row player assigns to pure
strategy~$i$. A mixed strategy of the column player is a column vector $\y$ of
length $k$, with the same interpretation. Given a mixed strategy $\x$ for
either player, the \emph{support} of $\x$, denoted $\supp(\x)$, is the set of
pure strategies $i$ with $x_i > 0$. If $\x$ and $\y$ are mixed strategies for
the row and column player, respectively, then we call $(\x, \y)$ a \emph{mixed
strategy profile}. 

\paragraph{Solution Concepts}

Let $(\x, \y)$ be a mixed strategy profile in a $k \times k$ bimatrix game $(R,
C)$. We say that a row $i \in [k]$ is a \emph{best response} for the row
player if $R_i \cdot \y = \max_{j \in [k]} R_j \cdot \y$.  We say that a column
$i \in [k]$ is a best response for the column player if $(\x \cdot C)_i =
\max_{j \in [k]} (\x
\cdot C)_j$. We define the \emph{regret} suffered by the row player to be the
difference between the payoff that the row player obtains under $(\x, \y)$, and
the payoff of a best response. More formally, the row player's regret is
$\max_{j \in [k]}( R_j \cdot \y) - \x \cdot R \cdot \y$.
Similarly, the column player's regret is defined to be
$\max_{j \in [k]}((\x \cdot C)_j) - \x \cdot C \cdot \y$.
The mixed strategy profile $(\x, \y)$ is a \emph{mixed Nash equilibrium} if
both players have regret $0$ under $(\x, \y)$. Note that this is equivalent to
saying that every pure strategy in $\supp(\x)$ is a best response against~$\y$,
and every pure strategy in $\supp(\y)$ is a best response against~$\x$. 

The two approximate solution concepts that we study in this paper both
weaken the requirements of a mixed Nash equilibrium, but in different ways. An
\emph{$\epsilon$-Nash equilibrium} is an approximate solution concept that
weakens the regret based definition of a mixed Nash equilibrium. For every
$\epsilon \in [0, 1]$, a mixed strategy profile $(\x, \y)$ is an
$\epsilon$-Nash equilibrium if both player suffer regret at most $\epsilon$
under $(\x, \y)$.

An \emph{$\epsilon$-well supported Nash equilibrium} ($\epsilon$-WSNE) weakens the
best-response definition of a mixed Nash equilibrium. A strategy $i \in
[k]$ is an $\epsilon$-best response against $\y$ if:
\begin{equation*}
R_i \cdot \y \ge \max_{j \in [k]}( R_j \cdot \y) - \epsilon.
\end{equation*}
Similarly, a strategy $j \in [k]$ is an $\epsilon$-best response against $\x$ if:
\begin{equation*}
(\x \cdot C)_j \ge \max_{j \in [k]}((\x \cdot C)_j) - \epsilon.
\end{equation*}
For every $\epsilon \in [0, 1]$, a mixed strategy profile $(\x, \y)$ is an
$\epsilon$-WSNE if every pure strategy in $\supp(\x)$ is an $\epsilon$-best
response against~$\y$, and every pure strategy in $\supp(\y)$ is an
$\epsilon$-best response against~$\x$. Note that every $\epsilon$-WSNE is an
$\epsilon$-Nash equilibrium, but that the converse does not hold.

\paragraph{Payoff Queries} 
In the payoff query model, an algorithm initially only knows the size of the
game, but does not know the payoffs. That is, the algorithm knows that $(R, C)$
is a $k \times k$ bimatrix game, but it does not know any of the payoffs in $R$
or $C$. In order to discover the payoffs, the algorithm must make payoff
queries. A payoff query is a pair $(i, j)$ where $i \in [k]$ is a pure strategy
of the row player, and $j \in [k]$ is a pure strategy of the column player. When
an algorithm makes a payoff query $(i, j)$, it receives a pair $(a, b)$ where $a
= R_{i, j}$ is the row player payoff and $b = C_{i,j}$ is the column player
payoff.

\section{An $\Omega(k^2)$ lower bound against deterministic algorithms for finding $(\frac{1}{2} - \epsilon)$-Nash equilibria}
\label{sec:epsdlower}

In this section, we show that for every $\epsilon > 0$, no deterministic
algorithm can find a $(\frac{1}{2} - \epsilon)$-Nash equilibrium in a $k \times
k$ bimatrix with fewer than $\frac{\epsilon}{2} \cdot k^2$ queries. Thus, we
show that the deterministic query complexity of finding a $(\frac{1}{2} -
\epsilon)$-Nash equilibrium is $\Omega(k^2)$ for all $\epsilon > 0$. This lower
bound holds even for zero-one constant-sum games.

Our proof will take the form of an algorithm interacting with an
adversary, who will respond to the payoff queries that are made by the
algorithm. The fundamental idea behind our lower bound is that the adversary
will hide a column $c$ such that $C_{i, c} = 1$ for all $i \in [k]$. Thus,
column $c$ always has payoff $1$ for the column player, no matter what strategy
the row player is using. Our goal is to show that the column
player must place a significant amount of probability on $c$ in order to be in a
$(\frac{1}{2} - \epsilon)$-Nash equilibrium.

We now define our adversary strategy. For each payoff query $(i, j)$, we
respond according to the following rules:
\begin{itemize}
\item If column $j$ has received fewer than $\epsilon \cdot k$ payoff queries,
then we respond with $(0, 1)$.
\item If column $j$ has received at least $\epsilon \cdot k$ payoff queries,
then we respond with $(1, 0)$.
\end{itemize}
The result of the interaction between the algorithm and the adversary is a
\emph{partial bimatrix game}, which is a bimatrix game $(R, C)$ where $R_{i,j}$
and $C_{i,j}$ are defined only for the pairs $(i, j)$ that have received a
payoff query. 

The idea behind this strategy is that, if an algorithm makes fewer than
$\epsilon \cdot k$ payoff queries in a column~$j$, then it cannot rule out the
possibility that $j$ is the hidden column~$c$. Crucially, if an algorithm makes
fewer than $\frac{\epsilon}{2} \cdot k^2$ payoff queries overall, then there will
be at least $\frac{k}{2}$ columns that receive fewer than $\epsilon \cdot k$
payoff queries, and thus, there will be at least $\frac{k}{2}$ possible
candidates for the hidden column.

So, suppose that our algorithm made fewer than $\frac{\epsilon}{2} \cdot k^2$
payoff queries and then produced a mixed strategy profile $(\x, \y)$. Let $(R',
C')$ be the resulting partial bimatrix game. We will extend $(R', C')$ to a
fully defined bimatrix game $(R, C)$ as follows:
\begin{itemize}
\item We first place the hidden column. To do so, we pick a column $c$ with
$\y_c < \frac{2}{k}$, such that $c$ has received fewer than $\epsilon \cdot k$
payoff queries. Since at least $\frac{k}{2}$ columns received
fewer than $\epsilon \cdot k$ payoff queries, such a column is guaranteed to
exist. We set $R_{i,c} = 0$ and $C_{i, c} = 1$ for all $i \in [k]$.
\item For each column $j \ne c$, we set all unqueried elements of $j$ to be $(1,
0)$. More formally, for each column $j \ne c$ and each $i \in [k]$, we set
\begin{equation*}
R_{i, j} = 
\begin{cases}
R'_{i,j} & \text{if $R'_{i,j}$ is defined.} \\
1 & \text{otherwise,}
\end{cases}
\end{equation*}
and we set $C_{i,j} = 1 - R_{i,j}$.
\end{itemize}
Note that $(R, C)$ is a zero-one constant-sum bimatrix game. The mixed strategy
profile $(\x, \y)$ and the bimatrix game $(R, C)$ will be fixed for the rest of
this section. Observe that, by construction, we have ensured that $\y$ plays $c$
with low probability. We will exploit this to show that $(\x, \y)$ is not a
$(\frac{1}{2} - \epsilon)$-Nash equilibrium in $(R, C)$. 

In our first lemma, we give a lower bound on the row player's best response
payoff against $\y$. 


\begin{lemma}
\label{lem:sgame}
The row player's best response payoff against $\y$ in $(R, C)$ is at least
$1 - \epsilon - \frac{2}{k}$.
\end{lemma}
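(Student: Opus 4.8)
The plan is to find a single pure row that guarantees the row player a payoff of at least $1 - \epsilon - \frac{2}{k}$ against $\y$; since the best response is at least as good as any fixed pure strategy, this suffices.

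Key observation: in the constructed game $(R,C)$, for every column $j \ne c$, the only entries with $R_{i,j} = 0$ are those that were actually queried and answered with $(0,1)$ — and by the adversary's rule, a column receives a $(0,1)$ answer only while it has fewer than $\epsilon k$ queries, so each column $j \ne c$ has at most $\epsilon k$ rows $i$ with $R_{i,j} = 0$; every other entry in column $j$ is $1$. The hidden column $c$ has $R_{i,c} = 0$ for all $i$, but $\y_c < \frac{2}{k}$, so column $c$ contributes at most $\frac{2}{k}$ to the row player's expected loss regardless of which row is played.

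So for any fixed row $i$, the row player's payoff $R_i \cdot \y$ against $\y$ is at least $(1 - \y_c) - \sum_{j \ne c : R_{i,j} = 0} \y_j \ge 1 - \frac{2}{k} - \sum_{j \ne c : R_{i,j} = 0} \y_j$. Now I want to pick a row $i$ so that the total $\y$-mass on columns $j \ne c$ where row $i$ has a zero is small. To do this I average over rows: since each column $j \ne c$ has at most $\epsilon k$ zero-rows, we have $\sum_{i=1}^{k} \sum_{j \ne c : R_{i,j}=0} \y_j = \sum_{j \ne c} \y_j \cdot |\{i : R_{i,j} = 0\}| \le \epsilon k \sum_{j \ne c} \y_j \le \epsilon k$. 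Hence the average over the $k$ rows of $\sum_{j \ne c : R_{i,j}=0}\y_j$ is at most $\epsilon$, so there exists a row $i^*$ with $\sum_{j \ne c : R_{i^*,j}=0}\y_j \le \epsilon$. Playing $i^*$ gives payoff at least $1 - \frac{2}{k} - \epsilon$, which is the claimed bound.

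I do not anticipate a serious obstacle here; the only point requiring a little care is the bookkeeping that ties the adversary's "$(0,1)$ only below the threshold" rule to the "$\le \epsilon k$ zeros per column" bound, and making sure the hidden column $c$ is handled separately via the $\y_c < \frac{2}{k}$ guarantee established in the construction. The averaging argument over rows is the crux but is entirely routine.
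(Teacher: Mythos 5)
Your proof is correct and is essentially the paper's argument in disguise: averaging $\sum_{j \ne c : R_{i,j}=0}\y_j$ over the $k$ rows is exactly the same computation as evaluating the uniform mixed strategy $\x'$ used in the paper, and your bound of at most $\epsilon k$ zero-entries per non-hidden column is the paper's key observation. The only cosmetic differences are that you work directly with $R$ rather than bounding the column player's payoff and invoking the constant-sum property, and that you extract a single pure row rather than citing the uniform mix itself; both steps are valid and yield the same bound.
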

\begin{proof}
Consider a strategy $\x'$ that mixes uniformly over all rows. Let $j$ be a column
such that $j \ne c$. By definition, we have that there are at most $\epsilon
\cdot k$ rows $i$ in $j$ that satisfy $C_{i, j} = 1$, and there are at least $k
- \epsilon \cdot k$ rows $i$ in $j$ that satisfy $C_{i, j} = 0$. Therefore,
since $\x'$ plays each row with probability $\frac{1}{k}$, we have $(\x' \cdot
C)_{j} \le \epsilon$. Note that $\y$ plays $c$ with probability at most
$\frac{2}{k}$. Therefore, we have the following bound on the payoff of $\y$
against $\x'$:
\begin{align*}
\x' \cdot C \cdot \y &= \y_c \cdot (\x' \cdot C)_c + \sum_{j \ne c} \y_j
\cdot (\x' \cdot C)_j \\
&\le \frac{2}{k} + \sum_{j \ne c} \y_j \cdot \epsilon \\
&\le \frac{2}{k} + \epsilon.
\end{align*}
Since the game is constant sum, this implies that the payoff of $\x'$ against
$\y$ is at least $1 - \epsilon - \frac{2}{k}$. Therefore, the row player's best
response against $\y$ must also have payoff at least $1 - \epsilon - \frac{2}{k}$.
\qed
\end{proof}

We now use the previous lemma to show that, provided that $\frac{2}{k} <
\epsilon$, the mixed strategy profile $(\x, \y)$ cannot be a
$(\frac{1}{2}-\epsilon)$-Nash equilibrium. 

\begin{lemma}
\label{lem:colreplace}
If $\frac{2}{k} < \epsilon$, then $(\x, \y)$ is not a
$(\frac{1}{2}-\epsilon)$-Nash equilibrium in $(R, C)$.
\end{lemma}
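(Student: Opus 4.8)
The plan is to argue by contradiction, playing the row player's best-response guarantee from Lemma~\ref{lem:sgame} against the fact that the hidden column $c$ always earns the column player payoff $1$. First I would record the structural facts about $(R,C)$: it is constant-sum with constant $1$, since $R_{i,c}+C_{i,c}=0+1=1$ for the hidden column and $R_{i,j}+C_{i,j}=1$ by construction for every column $j\ne c$; hence $\x\cdot R\cdot\y + \x\cdot C\cdot\y = 1$ for the fixed profile $(\x,\y)$. Moreover, since $C_{i,c}=1$ for all $i\in[k]$, we have $(\x\cdot C)_c = 1$, so the column player's best-response payoff against $\x$ is at least $1$, and therefore the column player's regret under $(\x,\y)$ is at least $1 - \x\cdot C\cdot\y = \x\cdot R\cdot\y$.

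Next I would suppose, for contradiction, that $(\x,\y)$ is a $(\tfrac12-\epsilon)$-Nash equilibrium. Applying the regret bound for the row player together with Lemma~\ref{lem:sgame}, which states that the row player's best-response payoff against $\y$ is at least $1-\epsilon-\tfrac2k$, gives
\begin{equation*}
\Bigl(1 - \epsilon - \tfrac2k\Bigr) - \x\cdot R\cdot\y \;\le\; \tfrac12 - \epsilon ,
\end{equation*}
which rearranges to $\x\cdot R\cdot\y \ge \tfrac12 - \tfrac2k$. Combining this with the observation above, the column player's regret under $(\x,\y)$ is at least $\tfrac12-\tfrac2k$. But since $(\x,\y)$ is assumed to be a $(\tfrac12-\epsilon)$-Nash equilibrium, the column player's regret is at most $\tfrac12-\epsilon$, so $\tfrac12-\tfrac2k \le \tfrac12-\epsilon$, i.e.\ $\epsilon\le\tfrac2k$, contradicting the hypothesis $\tfrac2k<\epsilon$.

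I do not anticipate a genuine obstacle here; the only point requiring care is keeping all the inequalities oriented correctly. The underlying tension is that a small row-player regret forces the row player's realized payoff to be large, which by the constant-sum property forces the column player's realized payoff to be small, which in turn — because column $c$ guarantees the column player payoff $1$ — forces the column player's regret to be large, and the two bounds cannot both be at most $\tfrac12-\epsilon$ once $\tfrac2k < \epsilon$.
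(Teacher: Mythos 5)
Your proof is correct and follows essentially the same route as the paper's: invoke Lemma~\ref{lem:sgame} to lower-bound the row player's realized payoff, pass through the constant-sum identity to upper-bound the column player's realized payoff, and contradict the guarantee of payoff $1$ from the hidden column $c$. The only cosmetic difference is that you phrase the final clash in terms of the column player's regret, whereas the paper compares the two bounds on the column player's payoff directly; the inequalities are identical.
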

\begin{proof}
Suppose for the sake of contradiction that $(\x, \y)$ is a $(\frac{1}{2} -
\epsilon)$-Nash equilibrium in $(R, C)$. By Lemma~\ref{lem:sgame}, the row
player's best response payoff against $\y$ is at least $1 -
\epsilon - \frac{2}{k}$. Since, by assumption, the regret of the row player is
at most $\frac{1}{2} - \epsilon$, the payoff of $\x$ against $\y$ must be at
least: 
\begin{equation*}
\left(1 - \epsilon - \frac{2}{k}\right) - \left(\frac{1}{2} -\epsilon\right) =
\frac{1}{2} - \frac{2}{k}.
\end{equation*} 
Since the game is constant sum, this then
implies that the payoff of $\y$ against $\x$ is at most $\frac{1}{2} +
\frac{2}{k}$. But,
since the payoff of column $c$ is always $1$, the
payoff of $\y$ against $\x$ must be at least $1 - (\frac{1}{2} - \epsilon) =
\frac{1}{2} + \epsilon$. Thus, we have a contradiction whenever $\frac{2}{k} <
\epsilon$.
\qed
\end{proof}

Note that the precondition of Lemma~\ref{lem:colreplace} is equivalent to $k >
\frac{2}{\epsilon}$. Therefore this lemma shows that, for every $\epsilon > 0$,
there exists a $k'$ such that, for all $k \ge k'$ no deterministic algorithm can
find a $(\frac{1}{2} - \epsilon)$-Nash equilibrium in a $k \times k$ bimatrix
game with fewer than $\frac{\epsilon}{2} \cdot k$ payoff queries. Thus, we have
shown the following theorem, which is the main result of this section.

\begin{theorem}
The deterministic query complexity of finding a $(\frac{1}{2} - \epsilon)$-Nash
equilibrium is $\Omega(k^2)$, even in zero-one constant-sum games.
\end{theorem}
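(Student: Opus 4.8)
The plan is to turn the adversary construction together with Lemmas~\ref{lem:sgame}--\ref{lem:colreplace} into a standard query lower bound. Fix $\epsilon > 0$ and set $k' = \lceil 2/\epsilon \rceil + 1$, so that $\frac{2}{k} < \epsilon$ for all $k \ge k'$. Suppose, for contradiction, that for some $k \ge k'$ there is a deterministic algorithm $A$ that finds a $(\frac{1}{2}-\epsilon)$-Nash equilibrium in every $k \times k$ bimatrix game while making strictly fewer than $\frac{\epsilon}{2} k^2$ payoff queries. Run $A$ against the adversary described above. Since $A$ is deterministic and the adversary's replies depend only on the per-column query counts, the entire interaction — the sequence of queries and the final output $(\x, \y)$ — is completely determined, producing a partial game $(R', C')$.

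Next I would carry out the counting. Because $A$ issues fewer than $\frac{\epsilon}{2} k^2$ queries, strictly fewer than $\frac{k}{2}$ columns can have received $\epsilon k$ or more queries, so more than $\frac{k}{2}$ columns received fewer than $\epsilon k$; and since the entries of $\y$ sum to $1$, at most $\frac{k}{2}$ columns $j$ have $\y_j \ge \frac{2}{k}$. By pigeonhole some column $c$ is simultaneously a low-query column and satisfies $\y_c < \frac{2}{k}$, so the completion of $(R', C')$ to the zero-one constant-sum game $(R, C)$ with hidden column $c$, exactly as defined above, goes through.

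The one step needing genuine care — and the only real obstacle — is indistinguishability: I must check that $(R, C)$ is consistent with every reply the adversary gave, so that $A$ run on $(R, C)$ would issue precisely the same queries and output the same $(\x, \y)$. Every queried entry of $(R, C)$ equals the corresponding entry of $(R', C')$ by construction; in particular column $c$ received fewer than $\epsilon k$ queries, so the adversary only ever answered $(0,1)$ there, which is consistent with $C_{i,c} = 1$ for all $i$; and the remaining columns are filled with $(1,0)$ on unqueried cells, consistent with the adversary's ``$(1,0)$ after $\epsilon k$ queries'' rule and legal because queried cells were $(0,1)$ or $(1,0)$. Hence $(\x,\y)$ is the genuine output of $A$ on the zero-one constant-sum game $(R, C)$ and must therefore be a $(\frac{1}{2}-\epsilon)$-Nash equilibrium of it.

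Finally I would invoke Lemma~\ref{lem:colreplace}: since $\frac{2}{k} < \epsilon$, that lemma asserts $(\x, \y)$ is \emph{not} a $(\frac{1}{2}-\epsilon)$-Nash equilibrium of $(R, C)$, a contradiction. Thus for every $k \ge k'$ every deterministic algorithm needs at least $\frac{\epsilon}{2} k^2$ queries; as $\epsilon$ is an arbitrary positive constant this is $\Omega(k^2)$, and since the hard instances are zero-one constant-sum, the stated strengthening follows.
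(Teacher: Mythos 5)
Your proposal is correct and follows essentially the same route as the paper: the same adversary, the same completion of the partial game around a hidden column $c$ with $\y_c < \frac{2}{k}$, and the same appeal to Lemma~\ref{lem:colreplace} for $k > \frac{2}{\epsilon}$. The only difference is that you spell out the pigeonhole count and the consistency/indistinguishability check explicitly, which the paper leaves implicit.
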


\section{An $\Omega(k^2)$ lower bound against randomized algorithms for finding
$\frac{1}{6k}$-Nash equilibria}
\label{sec:rlower}

In this section, we show that no randomized algorithm can find a
$\frac{1}{6k}$-Nash equilibrium while making $o(k^2)$ playoff queries, even in
zero-one constant-sum games.

At a high level, our technique is similar to the one we used for our
deterministic lower bound in in Section~\ref{sec:epsdlower}: we will hide a
column $c$ such that $C_{i, c} = 1$ for all $i \in [k]$. In this case however, 
instead of 
using an adversary, we use a probability distribution over games.
For each column $j \ne c$, we will select a single row
$r_j$ uniformly at random, and set $R_{r_j, j} = 1$ and $C_{r_j, j} = 0$. For
each row $i \ne r_j$ we will set $R_{i, j} = 0$ and $C_{i,j} = 1$. Thus, in
order to distinguish between column $c$, and a column $j \ne c$, the algorithm
must find the row $r_j$, and as we will show, even randomized algorithms cannot
do this in a query efficient manner.

Formally, we define $\G^k$ to be a random distribution over zero-one
constant-sum $k \times k$ bimatrix games, defined in the following way. To draw
a game from $\G^k$, we first choose a column $c \in [k]$ uniformly at random,
which will be referred to as the hidden column. Furthermore, we choose $r_1,
r_2, \dots, r_k$ to be $k$ uniformly and independently chosen rows (which may
include repeats). Then we construct the game $(R, C)$, where for all $i, j \in
[k]$ we have:
\begin{equation*}
C_{i,j} = \begin{cases}
1 & \text{if $j = c$,} \\
0 & \text{if $j \ne c$ and $r_j = i$,} \\
1 & \text{otherwise.}
\end{cases}
\end{equation*}
We define $R_{i,j} = 1 - C_{i,j}$ for all $i,j \in [k]$.

We show that if $(\x, \y)$ is a strategy profile in which $\y$
does not assign strictly more than $\frac{1}{2}$ probability to the hidden
column $c$, then $(\x, \y)$ is not a $\frac{1}{6k}$-Nash equilibrium. Since $\y$
can assign strictly more than $\frac{1}{2}$ probability to at most one column,
this implies that any algorithm that produces a $\frac{1}{6k}$-Nash equilibrium
must find the hidden column. The following lemma shows that this property holds
for all games in the support of the distribution $\G^k$.

\begin{lemma}
\label{lem:halfgame}
Let $(R, C)$ be a game drawn from $\G^k$, where $c$ is the hidden column. If
$(\x, \y)$ is a $\frac{1}{6k}$-Nash equilibrium in $(R, C)$, then $\y_c >
\frac{1}{2}$.
\end{lemma}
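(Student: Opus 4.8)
The plan is to argue by contradiction: suppose $(\x, \y)$ is a $\frac{1}{6k}$-Nash equilibrium in $(R,C)$ with $\y_c \le \frac{1}{2}$, and derive a contradiction from the constant-sum structure together with a lower bound on the row player's best-response payoff against $\y$.

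First I would compute the total payoff the row player receives, summed over all $k$ of its pure strategies, against $\y$. By construction $R_{i,c} = 0$ for every row $i$, so the hidden column contributes nothing; and for every column $j \ne c$ we have $R_{i,j} = 1$ exactly when $i = r_j$ and $R_{i,j} = 0$ otherwise, so column $j$ contributes exactly $\y_j$, all of it onto the single row $r_j$. Hence $\sum_{i \in [k]} R_i \cdot \y = \sum_{j \ne c} \y_j = 1 - \y_c \ge \frac{1}{2}$. By averaging, some row achieves payoff at least $\frac{1}{2k}$ against $\y$, so the row player's best-response payoff against $\y$ is at least $\frac{1}{2k}$. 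This pigeonhole step is really the only idea in the proof; everything else is bookkeeping with the equilibrium inequalities.

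Next I would combine this with the two $\frac{1}{6k}$-Nash conditions. Since the row player's best response pays at least $\frac{1}{2k}$ and its regret is at most $\frac{1}{6k}$, we get $\x \cdot R \cdot \y \ge \frac{1}{2k} - \frac{1}{6k} = \frac{1}{3k}$. On the column side, $C_{i,c} = 1$ for all $i$, so column $c$ is a best response against any $\x$ with payoff exactly $1$; the column player's regret being at most $\frac{1}{6k}$ then gives $\x \cdot C \cdot \y \ge 1 - \frac{1}{6k}$.

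Finally I would invoke the constant-sum property: every game in the support of $\G^k$ satisfies $R_{i,j} + C_{i,j} = 1$, hence $\x \cdot R \cdot \y + \x \cdot C \cdot \y = 1$, so $\x \cdot R \cdot \y \le 1 - (1 - \frac{1}{6k}) = \frac{1}{6k}$. Together with $\x \cdot R \cdot \y \ge \frac{1}{3k}$ this is a contradiction, since $\frac{1}{3k} > \frac{1}{6k}$. Therefore $\y_c > \frac{1}{2}$. Note that this argument is entirely deterministic: it holds for \emph{every} game in the support of $\G^k$, and the randomness of $\G^k$ is only needed afterwards, to show that a $o(k^2)$-query randomized algorithm cannot actually locate the hidden column $c$.
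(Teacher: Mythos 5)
Your proof is correct and follows essentially the same route as the paper's: the averaging step that extracts a row with payoff at least $\frac{1}{2k}$ is just the paper's uniform row strategy $\x'$ viewed through $R$ instead of $C$, and your final contradiction on $\x \cdot R \cdot \y$ is the paper's contradiction on the column player's regret, rearranged via the constant-sum identity. No gaps.
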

\begin{proof}
Suppose, for the sake of contradiction, that $(\x,\y)$ is a $\frac{1}{6k}$-Nash
equilibrium in $(R, C)$ and that $\y_c \le \frac{1}{2}$. We begin by arguing that
the row player's best response payoff against $\y$ is at least $\frac{1}{2k}$.
To see this, let $\x'$ be a row-player strategy that mixes uniformly over all
rows. Consider a column $j \ne c$, and observe that by construction, column $j$ contains
exactly one $0$ entry for the column player. Therefore, we have $(C \cdot
\x')_{j} = 1 - \frac{1}{k}$. Since $\y$ assigns at least $\frac{1}{2}$
probability to columns other than $c$, we have that the payoff of $\y$ against
$\x'$ is at most:
\begin{equation*}
\frac{1}{2} \cdot 1 + \frac{1}{2} \cdot (1 - \frac{1}{k}) = 1-\frac{1}{2k}. 
\end{equation*}
Since the game is constant-sum, this then
implies that the payoff of $\x'$ against $\y$ is at least $\frac{1}{2k}$. 

Since the row player's best response payoff against $\y$ must be at least
$\frac{1}{2k}$, and since $(\x, \y)$ is a $\frac{1}{6k}$-Nash equilibrium, we
have that the payoff of $\x$ against $\y$ is at least $\frac{1}{3k}$. Since the
game is constant-sum, this then implies that the payoff of $\y$ against $\x$ is
at most $1 - \frac{1}{3k}$. However, since the column player can obtain a payoff
of $1$ by playing column $c$, we have shown that the regret of $\y$ is strictly
more than $\frac{1}{6k}$, which provides our contradiction. \qed
\end{proof}

Recall that our goal is to show that even randomized algorithms cannot determine
if a column is the hidden column in a query efficient manner. In the next lemma,
we will formalise this idea. Suppose that our algorithm makes a series of
randomized queries. For each column $j$, we will use $A_j$ to be an indicator
variable for the event ``the algorithm did not find a row $i$ such that $C_{i,
j} = 0$''. Intuitively, if $A_j = 1$, then the algorithm cannot rule out the
possibility that column $j$ is the hidden column. The following lemma gives a
simple bound on the probability of $A_j$.

\begin{lemma}
\label{lem:littleo}
Let $(R, C)$ be a game drawn from $\G^k$.
Suppose that an algorithm makes $f(k)$ queries in column $j$. We have:
\begin{equation*}
\pr(A_j = 1) \ge 1 - \frac{f(k)}{k}.
\end{equation*}
\end{lemma}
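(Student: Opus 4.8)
The plan is to prove the equivalent bound $\pr(A_j = 0) \le f(k)/k$ via the principle of deferred decisions, treating the random row $r_j$ as revealed lazily. First I would note that, by the construction of $\G^k$, a cell $(i,j)$ with $C_{i,j}=0$ exists only if $j \ne c$, and in that case it is the \emph{unique} cell $(r_j, j)$ (and $R_{r_j,j}=1$, so this query returns $(1,0)$, while every other cell of column $j$ returns $(0,1)$). Hence the event $\{A_j = 0\}$ is exactly the event that $j \ne c$ and the algorithm queries $(r_j, j)$; in particular $\pr(A_j = 0) \le \pr\bigl[(r_j,j)\text{ is queried}\mid j \ne c\bigr]$, and it suffices to bound this conditional probability by $f(k)/k$. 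We may assume $f(k) \le k$, since otherwise $1 - f(k)/k \le 0$ and the claim is vacuous.

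Conditioned on $j \ne c$, the row $r_j$ is uniform on $[k]$ and --- this is the key structural point --- independent of everything the algorithm can learn from cells outside column $j$, since those cells depend only on $c$ and on the rows $r_{j'}$ with $j' \ne j$; a query to a cell $(\rho, j)$ of column $j$ merely reveals whether $\rho = r_j$. So I would run the algorithm against a lazy sampler of $r_j$. Let $\rho_1, \rho_2, \dots$ be the distinct rows that the algorithm queries in column $j$, listed in the order of first query (there are at most $f(k)$ of them). Just before the $t$-th such query, and conditioned on its entire history so far and on all $t-1$ previous column-$j$ responses being $(0,1)$, the posterior of $r_j$ is uniform over the $k-(t-1)$ rows not yet queried in column $j$; hence, since $\rho_t$ is determined by the history and the algorithm's coins and is therefore independent of $r_j$ given the history, the conditional probability that $\rho_t = r_j$ is at most $\tfrac{1}{k-t+1}$.

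To assemble the bound, set
\[
s_t = \pr\bigl[\text{the algorithm makes at least $t$ distinct column-$j$ queries, and the first $t-1$ of them return $(0,1)$} \;\big|\; j \ne c\bigr].
\]
The one-step estimate gives $s_{t+1} \le s_t\bigl(1 - \tfrac{1}{k-t+1}\bigr)$, so by telescoping $s_t \le \prod_{i=1}^{t-1}\bigl(1 - \tfrac{1}{k-i+1}\bigr) = \tfrac{k-t+1}{k}$. Decomposing $\{(r_j,j)\text{ queried}\}$ according to the first index $t$ at which a column-$j$ query returns $(1,0)$, we get
\[
\pr\bigl[(r_j,j)\text{ queried}\mid j\ne c\bigr] \;\le\; \sum_{t=1}^{f(k)} s_t\cdot\frac{1}{k-t+1} \;\le\; \sum_{t=1}^{f(k)} \frac{1}{k} \;=\; \frac{f(k)}{k},
\]
and therefore $\pr(A_j = 1) = 1 - \pr(A_j = 0) \ge 1 - f(k)/k$.

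The main obstacle is making the adaptivity and randomization rigorous: the queried rows $\rho_t$, and even the number of column-$j$ queries, are random and depend on the responses, so one cannot simply union-bound over a fixed set of cells --- the deferred-decisions coupling is what makes the argument sound. The one subtle quantitative point is that a naive union bound over the (up to) $f(k)$ column-$j$ queries would only give $\sum_{t=1}^{f(k)}\tfrac{1}{k-t+1} \approx \ln\tfrac{k}{k-f(k)}$, which is too weak once $f(k)$ is a constant fraction of $k$; the telescoping bound $s_t \le \tfrac{k-t+1}{k}$ on the survival probabilities is exactly what cancels the growing $\tfrac{1}{k-t+1}$ factors and recovers the clean $f(k)/k$.
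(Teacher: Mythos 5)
Your proof is correct and rests on the same fact as the paper's: conditioned on $j \ne c$, the row $r_j$ is uniform on $[k]$, so $f(k)$ queries in column $j$ locate it with probability at most $f(k)/k$. The paper's proof states this in a single line without addressing adaptivity; your deferred-decisions argument with the telescoping bound on the survival probabilities $s_t$ is a rigorous elaboration of that same one-line claim rather than a different route.
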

\begin{proof}
If $j = c$, then $\pr(A_j = 1) = 1$, and the lemma holds.
Otherwise, since the row $r_j$ was chosen uniformly at random, if we make $f(k)$
queries in column $j$, then we find $r_j$ with probability $\frac{f(k)}{k}$.
Therefore $\pr(A_j = 1) = 1 - \frac{f(k)}{k}$. 
\qed
\end{proof}

Note that, if $f(k) \in o(k)$, then $1 - \frac{f(k)}{k}$ tends to $1$ as $k$
tends to infinity. Thus we will have $A_j = 1$ almost always, and $j$ will be
indistinguishable from $c$. In the following lemma, we use this fact to show our
lower bound. In particular, we show that no algorithm that makes $o(k^2)$ payoff
queries can succeed with a positive constant probability.

\begin{lemma}
\label{lem:rlowerb}
Let $(R, C)$ be a game drawn from $\G^k$.
Any algorithm that makes $o(k^2)$ payoff queries cannot find a
$\frac{1}{6k}$-Nash equilibrium of $(R, C)$ with positive constant probability. 
\end{lemma}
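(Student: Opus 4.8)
The plan is to combine Lemma~\ref{lem:halfgame} with Lemma~\ref{lem:littleo} via a counting/averaging argument. Lemma~\ref{lem:halfgame} tells us that if the algorithm outputs a $\frac{1}{6k}$-Nash equilibrium $(\x,\y)$, then $\y$ must put more than $\frac{1}{2}$ probability on the hidden column $c$; in particular the algorithm must, in effect, identify $c$. So first I would fix an arbitrary randomized algorithm making $q(k) = o(k^2)$ queries, and argue that on a game drawn from $\G^k$ the algorithm can only succeed if it ``finds'' column $c$, in the sense that it queries an entry $(i,c)$ — but of course every entry in column $c$ looks like $(0,1)$, so the algorithm cannot actually confirm $c$ is the hidden column; it can only be the hidden column among the columns $j$ for which $A_j = 1$.

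The key step is then: with high probability, the number of columns $j$ with $A_j = 1$ is large (say at least $k/2$), so $c$ is hidden among many indistinguishable candidates and the algorithm cannot pin it down. Concretely, since the algorithm makes $q(k)$ queries total, at most $k/2$ columns can receive as many as $2q(k)/k$ queries; call the remaining (at least $k/2$) columns ``lightly queried.'' For each lightly queried column $j\ne c$, Lemma~\ref{lem:littleo} gives $\pr(A_j = 1) \ge 1 - 2q(k)/k^2$, which tends to $1$ since $q(k) = o(k^2)$. I would phrase this carefully using the randomness of $\G^k$ (the choice of $c$ and the $r_j$'s) together with the algorithm's internal randomness: condition on the query pattern, identify the lightly-queried set, and note that even the specific column $c$ is, conditioned on being lightly queried, found with probability at most $2q(k)/k^2 \to 0$.

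To turn ``$c$ is indistinguishable from many other columns'' into a bound on the success probability, I would use a symmetry argument over the choice of $c$. Fix the $r_j$ values and the algorithm's randomness; then $c$ is uniform over $[k]$, and the event that the algorithm ever queries a $0$-entry in column $c$ has probability at most $\mathbb{E}[(\text{number of columns in which a }0\text{ is found})]/k$, which is at most $q(k)/k = o(k)$... wait, that over-counts — more precisely, the number of columns where the algorithm discovers a $0$ for the column player is at most $q(k)$, hence the probability that $c$ is one of them is at most $q(k)/k$, but we actually want something stronger. The cleaner route: on the event $A_c = 1$ (which has probability $\ge 1 - q(k)/k \to 1$ averaging over $c$), the algorithm's entire view is consistent with $c$ being \emph{any} lightly-queried column, all of which are exchangeable; so conditioned on $A_c=1$ and on the set $L$ of lightly-queried columns (with $|L|\ge k/2$), the output $\y$ satisfies $\y_{c'} > \frac12$ for at most one $c' \in L$, and since $c$ is uniform on $L$ given this conditioning, the probability that $\y_c > \frac12$ is at most $1/|L| \le 2/k \to 0$. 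Combining, $\pr(\text{success}) \le \pr(A_c = 0) + \pr(\text{success} \mid A_c = 1) \le q(k)/k + 2/k \to 0$, contradicting any positive constant success probability.

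The main obstacle I expect is making the exchangeability/symmetry step rigorous: I must be careful that conditioning on ``$c$ is lightly queried'' does not skew the distribution of $c$ within $L$, and that the algorithm's adaptive, randomized query choices do not secretly leak information about $c$ beyond the $A_j$ indicators. The right framework is to fix the algorithm's random tape and the values $r_1,\dots,r_k$, and observe that as a function of $c$, the transcript (queries and responses) and hence the output is determined; a relabeling of columns that fixes the query order but permutes the hidden column among $L$ shows the output's behavior on $L$ is symmetric, which is exactly what licenses the $1/|L|$ bound. Handling the repeats among the $r_j$ (which can only help, since a repeated $r_j$ makes $A_j$ even more likely) and the case $f(k) \ge k$ in Lemma~\ref{lem:littleo} are minor technical points to dispatch along the way.
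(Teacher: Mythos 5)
Your proposal follows the same overall strategy as the paper's proof: Lemma~\ref{lem:halfgame} reduces success to identifying the hidden column, a counting argument shows that at least $k/2$ columns are lightly queried, Lemma~\ref{lem:littleo} shows each of these survives as a candidate for $c$ with probability $1-o(1)$, and the uniformity of $c$ then caps the success probability. Where you genuinely diverge is in the finishing step. The paper posits a constant success probability $3\delta$, applies Hoeffding's inequality to concentrate the number $|S'|$ of surviving candidates around its mean, and derives a contradiction; you instead condition on the transcript and use exchangeability: for a fixed random tape and fixed $r_1,\dots,r_k$, every column $c'$ whose cell $(r_{c'},c')$ was never queried yields the identical transcript and hence the identical output $\y$, so $\pr(\y_c>\frac{1}{2}\mid t)\le 1/|N(t)|$ where $N(t)$ is this set of indistinguishable columns. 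This avoids Hoeffding altogether (and sidesteps the paper's delicate assertion that the events $A_j$ are independent under an adaptive algorithm), and it gives the slightly stronger conclusion that the success probability is $o(1)$ rather than merely below every fixed constant. One correction: the event you write as ``$A_c=0$'' never occurs, since column $c$ contains no zero entries for the column player and hence $A_c=1$ deterministically; the event you actually need to exclude is that the algorithm queries the specific cell $(r_c,c)$, which is what collapses $N(t)$ to $\{c\}$ --- your relabeling argument at the end identifies this correctly, and since $r_c$ is independent of the transcript, a lightly queried column hits it with probability at most $2q(k)/k^2=o(1)$. Finally, note that both your sketch and the published proof treat the set of lightly queried columns as if it were fixed, even though for an adaptive algorithm it is random and correlated with $c$; you are not losing rigor relative to the paper on this point, but a fully formal argument would fix the tape and the $r_j$'s first, exactly as you propose in your last paragraph.
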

\begin{proof}
Suppose that the algorithm makes at most $g(k) \in o(k^2)$
payoff queries and outputs a mixed strategy profile $(\x, \y)$. Suppose, for the
sake of contradiction, that $(\x, \y)$ is a $\frac{1}{6k}$-Nash equilibrium with
probability $3 \cdot \delta$, for constant  $\delta \in (0, \frac{1}{3}].$ 

Note that at least $k - \delta \cdot k$ columns must receive $o(k)$ payoff
queries, because otherwise $\delta \cdot k$ columns would receive $\Omega(k)$
payoff queries, giving $\Omega(k^2)$ payoff queries in total. Thus, there must
exist a set $S$ of columns, with $|S| = k - \delta \cdot k$, and a function
$f(k) \in o(k)$, such that each column in $S$ received at most $f(k)$ payoff
queries. 

Let $S' \subseteq S$ be the set of columns in $S$ such that $A_j = 1$.
Recall that, for each column $j$ in $S'$, the algorithm cannot distinguish
between $j$ and $c$. We now prove a lower bound on the size of $S'$.
By Lemma~\ref{lem:littleo}, we have that if $j \in S$,
then $E[A_{j}] \ge 1 - \frac{f(k)}{k}$.
Define $A = \sum_{j
\in S} A_{j}$, and note that by linearity of expectations we have:
\begin{equation*}
E[A] \ge (k - \delta \cdot k) \cdot (1 - \frac{f(k)}{k})
\end{equation*}
Since each of the events corresponding to $A_{j}$ are independent, we can apply Hoeffding's
inequality to obtain:
\begin{equation*}
\pr(|A - E[A]| \ge \delta \cdot k) \le 2 \cdot \exp(-\frac{2 \cdot (\delta \cdot k)^2}{k -
\delta k}) 
\end{equation*}
Thus, with probability at least $1 - 2 \cdot \exp(-\frac{2 \cdot (\delta \cdot
k)^2}{k - \delta k})$ we have that $S'$ contains at least
\begin{equation*}
(k - \delta \cdot k) \cdot (1 - \frac{f(k)}{k}) - \delta \cdot k
\end{equation*}
columns. 

Let us focus on the case where $S'$ contains at least this many columns. Note
that $\y$ can assign strictly more than $\frac{1}{2}$ probability to at most one
column in $S'$, and therefore there are at least $|S'| - 1$ columns in $S'$ that
are not assigned strictly more than $\frac{1}{2}$ probability by $\y$. Since $c$
is chosen uniformly at random, and since the hidden column could be any of the
columns in $S'$, we have that $\y_c \le \frac{1}{2}$ with probability at least:
\begin{align*}
(|S'| - 1) \cdot \frac{1}{k} &= 
\frac{(k - \delta \cdot k) \cdot (1 - \frac{f(k)}{k}) - \delta \cdot k - 1}{k}\\
&=(1 - \delta) \cdot (1 - \frac{f(k)}{k}) - \delta  - \frac{1}{k}.
\end{align*}
By Lemma~\ref{lem:halfgame}, if $\y_c \le \frac{1}{2}$, then $(\x, \y)$ is not a
$\frac{1}{6k}$-Nash equilibrium.

In summary, we have that $(\x, \y)$ is not a $\frac{1}{6k}$-Nash
equilibrium with probability at least
\begin{equation*}
\left(1 - 2 \cdot \exp(-\frac{2 \cdot (\delta \cdot k)^2}{k - \delta k})\right)
\cdot \left((1 - \delta) \cdot (1 - \frac{f(k)}{k}) - \delta  -
\frac{1}{k}\right),
\end{equation*} 
where the first term is the probability that $S'$ contains enough columns, and
the second term is the probability that $\y_c \le \frac{1}{2}$. 
As $k$ tends to infinity, we have that 
$\exp(-\frac{2 \cdot (\delta \cdot k)^2}{k - \delta k})$ tends to $0$, and since
$f(k) \in o(k)$, we have that $\frac{f(k)}{k}$ tends to $0$. Thus, the entire
expression tends to $1 - 2 \cdot \delta$. Thus, for large $k$, the
algorithm will fail with probability strictly greater than $1 - 3 \cdot \delta$,
which contradicts our assumption that the algorithm succeeds with probability at
least $3 \cdot \delta$.
\end{proof}

Lemma~\ref{lem:rlowerb} shows that every algorithm that makes $o(k^2)$ payoff
queries on a game drawn from $\G^k$ will fail to find a $\frac{1}{6k}$-Nash
equilibrium with probability tending to 1 as $k$ tends to infinity. Moreover,
recall that all games in $\G^k$ are zero-one constant-sum games. Therefore,
we have shown the following theorem, which is the main result of this section.

\begin{theorem}
The randomized query complexity of finding a $\frac{1}{6k}$-Nash equilibrium is
$\Omega(k^2)$, even in zero-one constant-sum games.
\end{theorem}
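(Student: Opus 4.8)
The plan is to obtain the theorem directly from Lemma~\ref{lem:rlowerb} by a standard averaging argument (Yao's principle). Recall that, formally, a randomized querying algorithm is a distribution over deterministic querying algorithms, and that its query complexity is the least number of queries that suffice to output a $\frac{1}{6k}$-Nash equilibrium with probability at least some fixed positive constant on every input game.

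I would argue by contradiction. Suppose there were a randomized algorithm $\mathcal{A}$ that, on every zero-one constant-sum $k \times k$ bimatrix game, makes $o(k^2)$ payoff queries and outputs a $\frac{1}{6k}$-Nash equilibrium with probability at least a positive constant $p$. The key point is that, by construction, every game in the support of $\G^k$ is a zero-one constant-sum game. Hence, drawing $(R, C) \sim \G^k$ and then running $\mathcal{A}$ on $(R, C)$ yields a $\frac{1}{6k}$-Nash equilibrium of the drawn game with probability at least $p$, where this probability is taken jointly over the draw from $\G^k$ and the internal coins of $\mathcal{A}$. But $\mathcal{A}$ is exactly ``an algorithm that makes $o(k^2)$ payoff queries'' in the sense of Lemma~\ref{lem:rlowerb} --- the proof of that lemma treats the queries and the final output as a single randomized process and bounds precisely this joint success probability, showing it tends to $0$ as $k \to \infty$. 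In particular, for all sufficiently large $k$ this probability is strictly less than $p$, a contradiction. Therefore no such $\mathcal{A}$ exists: any randomized algorithm that finds a $\frac{1}{6k}$-Nash equilibrium with positive constant probability on all zero-one constant-sum games must make $\Omega(k^2)$ payoff queries, which is the claim.

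I do not expect a real obstacle here: once Lemma~\ref{lem:rlowerb} is established the theorem is essentially a restatement of it, and the only detail that needs checking --- that the hard distribution $\G^k$ is supported entirely on zero-one constant-sum games, so that the lower bound holds even for that restricted class --- was already arranged when $\G^k$ was defined. If anything, the subtle point worth stating explicitly is that the probability bound in Lemma~\ref{lem:rlowerb} is over the coins of the algorithm as well as the random instance, so that it applies to a genuinely randomized $\mathcal{A}$ and not merely to deterministic algorithms run on random inputs.
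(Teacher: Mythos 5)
Your proposal is correct and follows the paper's own derivation: the theorem is obtained directly from Lemma~\ref{lem:rlowerb} together with the observation that every game in the support of $\G^k$ is a zero-one constant-sum game. The extra care you take in noting that the failure probability in the lemma is over the joint randomness of the instance and the algorithm's coins is a worthwhile clarification, but it does not change the route.
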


\section{Zero-sum Games}
\label{sec:zero-sum}

We now turn our attention to showing upper bounds. In Sections~\ref{sec:enbm}
and~\ref{sec:wsnebm}, we will give query efficient randomized algorithms for
finding $(\frac{3 - \sqrt{5}}{2} + \epsilon)$-Nash equilibria, and $(\frac{2}{3}
+ \epsilon)$-WSNE, respectively. In this section, we give some preliminary
results on zero-sum games, which are required for our later results: the result
in Section~\ref{sec:enbm} requires us to find an $\epsilon$-Nash equilibrium of
a zero-sum game, and the result in Section~\ref{sec:wsnebm} requires us to find
an $\epsilon$-WSNE of a zero-sum game. In Section~\ref{sec:gr}, we present the
previous work of Goldberg and Roth~\cite{GR13}, which provides a randomized
query-efficient algorithm for finding an $\epsilon$-Nash equilibrium in a
zero-sum game, and in Section~\ref{sec:grwsne}, we show how this can be
converted into a randomized query-efficient algorithm for finding an
$\epsilon$-WSNE in a zero-sum game.

Recall that we assumed that all bimatrix game payoffs lie in the range $[0,1]$.
The two algorithms that we adapt both solve games that meet this assumption. In
doing so, both algorithms create and solve a derived zero-sum game with payoffs
lying in the range $[-1,1]$. Thus, for the sake of convenience, during this
section on zero-sum games, we assume that all payoffs lie in the range
$[-1,1]$.

\subsection{A randomized algorithm for finding an $\epsilon$-Nash equilibrium of
a zero-sum game}
\label{sec:gr}

The following theorem, shown by Goldberg and Roth~\cite{GR13} using using
multiplicative weights updates no-regret algorithms, states that we have a
randomized $O(\frac{k \cdot \log k}{\epsilon^2})$ payoff query algorithm that
finds an $\epsilon$-Nash equilibrium in a zero-sum game.

\begin{theorem}[\cite{GR13}]
\label{thm:zerosum}
An $\epsilon$-Nash equilibrium in a $k \times k$ zero-sum bimatrix game can,
with probability $1 - k^{-\frac{1}{8}}$, be computed using $O(\frac{k \cdot \log k}{\epsilon^2})$ payoff queries.
\end{theorem}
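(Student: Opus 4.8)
The plan is to combine the classical reduction from approximate saddle points to online learning with a sampling scheme that implements the learning dynamics cheaply in the payoff query model. First I would recall the folklore fact that, in a zero-sum game, if the row player (the maximiser) produces mixed strategies $\x^1,\dots,\x^T$ by running an external-regret minimising algorithm against the observed reward vectors $R\y^1,\dots,R\y^T$, and symmetrically the column player produces $\y^1,\dots,\y^T$ against $\x^1 C,\dots,\x^T C$, then the time-averaged profile $(\bar{\x},\bar{\y})$ with $\bar{\x}=\frac1T\sum_t \x^t$ and $\bar{\y}=\frac1T\sum_t \y^t$ is a $(2\rho_T/T)$-Nash equilibrium, where $\rho_T$ upper bounds each player's regret over the $T$ rounds. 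Instantiating both players with the multiplicative weights update (MWU) algorithm with step size tuned to $T$, whose regret on rewards in $[-1,1]$ is $\rho_T=O(\sqrt{T\log k})$, a horizon of $T=\Theta(\log k/\epsilon^2)$ makes $(\bar{\x},\bar{\y})$ an $\epsilon$-Nash equilibrium.

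Second I would explain how to simulate a single round using few queries. Neither player can see its full reward vector without reading the whole matrix, but an \emph{unbiased} estimate of the entire vector $R\y^t$ can be obtained from $k$ queries: sample one column index $j^t\sim\y^t$, query the profiles $(1,j^t),\dots,(k,j^t)$, and take the column $(R_{i,j^t})_{i\in[k]}$ as the estimate, since $\mathbb{E}[R_{i,j^t}]=(R\y^t)_i$. Symmetrically, sampling a row index $i^t\sim\x^t$ and querying $(i^t,1),\dots,(i^t,k)$ yields an unbiased estimate of $\x^t C$ at a cost of $k$ queries. Feeding these estimated vectors into the MWU updates costs $2k$ queries per round, hence $O(k\log k/\epsilon^2)$ queries over all $T=\Theta(\log k/\epsilon^2)$ rounds, matching the claimed bound.

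The step I expect to be the main obstacle is controlling the estimation noise. Run on the \emph{estimated} reward vectors, MWU has regret $O(\sqrt{T\log k})$ with respect to those estimated rewards; to transfer this guarantee to the true rewards, for each player and each action $i$ consider the martingale whose $t$-th increment is the difference between the estimated reward fed to the algorithm and the true conditional expectation $(R\y^t)_i$ (resp.\ $(\x^t C)_i$). Since all payoffs lie in $[-1,1]$ these increments lie in $[-2,2]$, so Azuma--Hoeffding bounds the accumulated deviation by $O(\sqrt{T\log(k/\delta)})$ with probability $1-\delta$; a union bound over the $O(k)$ martingales then shows that, except with probability $O(k\delta)$, each player's true regret is also $O(\sqrt{T\log(k/\delta)})$. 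Taking $\delta=k^{-O(1)}$ (so that the overall failure probability is at most $k^{-1/8}$) changes the regret and query bounds by only constant factors, since $\log(k/\delta)=O(\log k)$; re-tuning $T=\Theta(\log k/\epsilon^2)$ then yields an $\epsilon$-Nash equilibrium with probability at least $1-k^{-1/8}$. The delicate points are establishing the boundedness and adaptedness of the martingale increments and checking that the averaging identity linking regret to an approximate saddle point still applies when only the averaged strategies, and not the players' exact best-response payoffs, are retained.
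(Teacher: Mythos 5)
The paper does not prove this statement itself: it is imported verbatim from Goldberg and Roth~\cite{GR13}, who, as the paper notes, establish it via multiplicative-weights no-regret dynamics --- exactly the route you take. Your reconstruction (Freund--Schapire averaging of no-regret iterates, one sampled column/row per round giving an unbiased estimate of the full reward vector at a cost of $2k$ queries, and Azuma--Hoeffding with a union bound over the $O(k)$ per-action martingales to transfer the regret guarantee from estimated to true rewards, with $\delta$ tuned so the total failure probability is $k^{-1/8}$) is sound and matches the approach the paper attributes to the cited work.
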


When we apply this result, we will also need to know the payoff vectors for both
players. We now give a randomized $O(\frac{k \cdot \log k}{\epsilon^2})$
algorithm for discovering \emph{approximate} payoff vectors. Let $(\x, \y)$ be a
mixed strategy profile in a $k \times k$ bimatrix game $(R, C)$. We say that a
$k$-dimensional vector $\row$ is an \emph{$\epsilon$-approximate payoff vector}
for the row player if, for each $i \in [k]$, we have $| \row_i - R_i \cdot \y |
\le \epsilon$. Similarly, we say that a $k$-dimensional vector $\col$ is an
$\epsilon$-approximate payoff vector for the column player if, for each $i \in
[k]$, we have $|\col_i - (\x \cdot C)_i| \le \epsilon$. The following lemma
shows that we can use a randomized algorithm to find an $\epsilon$-approximate
payoff vector for the row player using $O(\frac{k \cdot \log k}{\epsilon^2})$
payoff queries.

\begin{lemma}
\label{lem:rowapprox}
Let $(\x, \y)$ be a mixed strategy profile in a $k \times k$ bimatrix game $(R,
C)$. With probability at least $1 - \frac{2}{k}$ we can find an
$\epsilon$-approximate payoff vector for the row player using $O(\frac{k \cdot
\log k}{\epsilon^2})$ payoff queries.
\end{lemma}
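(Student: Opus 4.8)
The plan is to estimate each row's expected payoff $R_i \cdot \y$ by sampling columns from the distribution $\y$ and averaging the observed row-player payoffs, crucially reusing a \emph{single} sample of columns across all $k$ rows so that the total number of queries remains $O(\frac{k \cdot \log k}{\epsilon^2})$ rather than $\Theta(k^2)$.

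Concretely, first I would fix $m := \lceil 4 \ln k / \epsilon^2 \rceil$ and draw columns $j_1, \dots, j_m$ independently, each distributed according to $\y$; this is possible because the profile $(\x, \y)$ is known to the algorithm, only the payoff matrices are unknown. Then, for every row $i \in [k]$ and every $t \in [m]$, I would make the payoff query $(i, j_t)$ and set $\row_i := \frac{1}{m} \sum_{t=1}^m R_{i, j_t}$. This makes at most $m \cdot k = O(\frac{k \cdot \log k}{\epsilon^2})$ payoff queries in total (caching the value of a repeated query $(i, j_t)$ only reduces this count).

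For a fixed row $i$, the random variables $R_{i, j_1}, \dots, R_{i, j_m}$ are i.i.d., each lies in $[-1,1]$ by the payoff convention adopted in this section, and each has expectation $\sum_{j \in [k]} \y_j R_{i,j} = R_i \cdot \y$; hence $\row_i$ is an unbiased estimator of $R_i \cdot \y$. Hoeffding's inequality then gives $\pr(|\row_i - R_i \cdot \y| > \epsilon) \le 2 \exp(-m \epsilon^2 / 2) \le 2 k^{-2}$ by the choice of $m$. A union bound over the $k$ rows shows that, with probability at least $1 - \frac{2}{k}$, we have $|\row_i - R_i \cdot \y| \le \epsilon$ for all $i \in [k]$ simultaneously, i.e.\ $\row$ is an $\epsilon$-approximate payoff vector for the row player.

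The only point that needs care is exactly this sharing of the column sample across rows: it is what keeps the query count linear in $k$, but it makes the $k$ per-row failure events dependent. Since we combine them via a union bound rather than a concentration argument, the dependence is harmless, so I do not expect any real obstacle here — the statement follows from a single application of Hoeffding's inequality together with the union bound.
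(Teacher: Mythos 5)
Your proof is correct and follows essentially the same approach as the paper: sample columns according to $\y$, estimate each row's expected payoff by the empirical mean, apply Hoeffding's inequality per row, and take a union bound over the $k$ rows. The one thing to fix is your claim that reusing a \emph{single} column sample across all rows is crucial to avoid $\Theta(k^2)$ queries: the paper draws fresh samples for each row, and $k$ rows times $O(\frac{\log k}{\epsilon^2})$ queries per row is still only $O(\frac{k \cdot \log k}{\epsilon^2})$, so the sharing is a harmless (and valid, since the union bound tolerates the resulting dependence) optimization rather than a necessity.
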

\begin{proof}
We begin by giving a randomized method for finding the payoff of a fixed row $i
\in [k]$ with probability $1 - \frac{2}{k^2}$ using $\frac{8 \cdot \ln
k}{\epsilon^2}$ many payoff queries. We will make $T = \frac{8 \cdot \ln
k}{\epsilon^2}$ many payoff queries along row $i$, chosen according to the
probability distribution $\y$. Let $X_1, X_2, \dots, X_T$ be the result of
these queries, and define $\overline{X} = \frac{1}{T}(X_1 + X_2 + \dots +
X_T)$. Applying Hoeffding's inequality, and noting that all payoffs lie in the
range $[-1, 1]$, gives:
\begin{align*}
\pr(|\overline{X} - R_i \cdot \y_i| \ge \epsilon) & \le 2 \cdot \exp(-\frac{2
\cdot T \cdot \epsilon^2}{(1 + 1)^2})\\
& = 2 \cdot \exp(-2 \cdot \ln k)\\
& = \frac{2}{k^2}.
\end{align*}
Thus, with probability $1 - \frac{2}{k^2}$, we have that $\overline{X}$ is within $\epsilon$ of $(R \cdot \y)_i$. 

Now, to find an $\epsilon$-approximation of the row player's payoffs, we simply
apply the above method separately for each row $i \in [k]$. Clearly, this
will use $O(\frac{k \cdot \ln k}{\epsilon^2})$ many payoff queries. The
probability that we correctly compute an $\epsilon$-approximation of the row
player's payoffs is:
\begin{align*}
\left(1 - \frac{2}{k^2}\right)^k 
&\ge 1 - {k \choose 1} \cdot \frac{2}{k^2} \\
& = 1 - \frac{2}{k}.
\end{align*}
This completes the proof. 
\qed
\end{proof}

Note that, since we can swap the roles of the two players,
Lemma~\ref{lem:rowapprox} can also be used to find an $\epsilon$-approximate
payoff vector for the column player. Combining Lemma~\ref{lem:rowapprox} with
Theorem~\ref{thm:zerosum} gives the following corollary.

\begin{corollary}
\label{cor:zerosum}
Given a $k \times k$ zero-sum bimatrix game, with probability at least $(1 -
k^{-\frac{1}{8}})(1 - \frac{2}{k})^2$, we can compute an
$\epsilon$-Nash equilibrium $(\x, \y)$, and  $\epsilon$-approximate payoff
vectors for both players under $(\x, \y)$, using $O(\frac{k \cdot \log
k}{\epsilon^2})$ payoff queries.
\end{corollary}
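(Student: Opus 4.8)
The plan is to chain the three randomized subroutines we already have, feeding the output of one into the next. First I would invoke Theorem~\ref{thm:zerosum} on the given $k \times k$ zero-sum game to obtain an $\epsilon$-Nash equilibrium $(\x, \y)$; this costs $O(\frac{k \cdot \log k}{\epsilon^2})$ payoff queries and succeeds with probability at least $1 - k^{-1/8}$. With that profile in hand, I would then run the procedure of Lemma~\ref{lem:rowapprox} on input $(\x, \y)$ to produce an $\epsilon$-approximate payoff vector $\row$ for the row player, again at a cost of $O(\frac{k \cdot \log k}{\epsilon^2})$ queries and with success probability at least $1 - \frac{2}{k}$. Finally I would apply Lemma~\ref{lem:rowapprox} a second time with the roles of the two players swapped (as noted immediately after that lemma) to obtain an $\epsilon$-approximate payoff vector $\col$ for the column player, at the same query cost and with success probability at least $1 - \frac{2}{k}$.

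For the query count, the total number of queries is a sum of three terms, each $O(\frac{k \cdot \log k}{\epsilon^2})$, so the overall query complexity remains $O(\frac{k \cdot \log k}{\epsilon^2})$. For the success probability, the key point is that each of the three subroutines draws its own fresh internal randomness, so the three success events are mutually independent; hence the probability that all three succeed is at least the product $(1 - k^{-1/8})(1 - \frac{2}{k})^2$, which is exactly the claimed bound. When all three succeed simultaneously, $(\x, \y)$ is an $\epsilon$-Nash equilibrium and $\row$, $\col$ are $\epsilon$-approximate payoff vectors for the two players under $(\x, \y)$, which is what the corollary asserts.

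The only mild subtlety, and the point I would be most careful to justify, is the independence used to multiply the three probabilities rather than merely union-bounding them. This is legitimate because the choice of which pure profiles to query in each phase is driven entirely by that phase's own coin flips, and a repeated query simply returns the same deterministic payoff, so conditioning on the outcome of one phase does not change the conditional success probability of the next. If one preferred to sidestep this argument, a union bound would still give success probability at least $1 - k^{-1/8} - \frac{4}{k}$, which is asymptotically the same; but the product form stated in the corollary falls out directly from the independence observation, so that is the version I would record.
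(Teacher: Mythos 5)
Your proposal is correct and matches the paper's (implicit) argument exactly: the corollary is obtained by running Theorem~\ref{thm:zerosum} once and Lemma~\ref{lem:rowapprox} twice (once per player), summing the three query costs and multiplying the three success probabilities. Your extra care about why the product bound (rather than a union bound) is justified is a reasonable elaboration of a point the paper leaves unstated.
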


We now introduce further notation for working with approximate payoff vectors.
Let $(\x, \y)$ be a mixed strategy profile in a bimatrix game $(R, C)$.
Let $\row$ and $\col$ be $\epsilon$-approximate payoff vectors for the row and
column player, respectively. We say that a row $i$ is a \emph{best response
according to $\row$} if $\row_i = \max_{j \in [k]} \row_j$, and that a column
$i$ is a best response according to $\col$ if $\col_i = \max_{j \in [k]}
\col_j$. We will frequently use the fact that, if $i$ is a best response
according to $\row$, and if $i'$ is an actual best response against $\y$, then: 
\begin{equation}
\label{eqn:bestresponse}
|\row_i - R_{i'} \cdot \y | \le \epsilon.
\end{equation}
This is because, if $\row_i > R_{i'} \cdot \y + \epsilon$, then $R_{i} \cdot \y
> R_{i'} \cdot \y$, which would contradict the fact that $i'$ is a best
response, and if
$\row_i < R_{i'} \cdot \y + \epsilon$ then $\row_i < \row_{i'}$, which would
contradict the fact that $i$ is a best response according to $\row$.
If row $i$ is a best response according to $\row$, then we
define the row player's \emph{regret according to $\row$} to be $\row_i - \x
\cdot \row$. Similarly, if column $i$ a best response according to $\col$, then
we define the regret according to $\col$ to be $\col_i - \col \cdot \y$. Note
that, if $i$ is an actual best response against $\y$, then we have:
\begin{align*}
| (\row - \x \cdot \row) - (R_i \cdot \y - \x \cdot R \cdot \y) | \le 2
\epsilon.
\end{align*}
In other words, the regret according to $\row$ is within $2 \epsilon$ of the
actual regret suffered by the row player under $(\x, \y)$.

\subsection{A randomized algorithm for finding an $\epsilon$-WSNE of a zero-sum
game}
\label{sec:grwsne}

In this section, we give a randomized query efficient algorithm for finding
approximate well-supported Nash equilibria in zero-sum games. Our approach is to
first compute an approximate Nash equilibrium of the zero-sum game using
Corollary~\ref{cor:zerosum}, and to then convert this into an $\epsilon$-WSNE
for the zero-sum game. Chen, Deng, and Teng have given an algorithm (henceforth
referred to as the CDT algorithm) that takes a $\frac{\epsilon^2}{8}$-Nash
equilibrium of a game, and in polynomial-time finds an $\epsilon$-WSNE for that
game~\cite{CDT09}. However, their method requires that we know the payoff of
every pure strategy in the $\frac{\epsilon^2}{8}$-Nash equilibrium. In our
setting, we only know approximate payoff vectors for both players, so the CDT
algorithm cannot be directly applied. In the next lemma, we show a variant of
their result, which can be applied when we only know approximate payoff vectors.

\begin{lemma}
\label{lem:wsneconvert}
Let $(R, C)$ be a bimatrix game, let $(\x, \y)$ be an
$\frac{\epsilon^2}{24}$-Nash equilibrium of $(R, C)$, and let $\row$ and $\col$
be $\frac{\epsilon^2}{24}$-approximate payoff vectors for the row and column
player under  $(\x, \y)$. Without making any payoff queries, we can construct an
$\epsilon$-WSNE of $(R, C)$.
\end{lemma}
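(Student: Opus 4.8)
The plan is to carry out the CDT-style reduction from an approximate Nash equilibrium to an approximate well-supported equilibrium, but run it entirely on the approximate payoff vectors $\row$ and $\col$ rather than on the (unavailable) exact payoff rows and columns. The first step is to absorb the approximation error of the payoff vectors into the quality of the equilibrium: by the observation at the end of Section~\ref{sec:gr}, the regret of the row player \emph{according to $\row$} differs from her true regret by at most $2\cdot\frac{\epsilon^2}{24}$, hence is at most $\frac{\epsilon^2}{24}+\frac{\epsilon^2}{12}=\frac{\epsilon^2}{8}$; symmetrically for the column player according to $\col$. So $(\x,\y)$, viewed through the lens of $(\row,\col)$, behaves exactly like an $\frac{\epsilon^2}{8}$-Nash equilibrium for which we have exact payoff information, which is the regime in which the CDT construction operates.

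The construction itself makes no payoff queries. For the row player, let $b$ be a best response according to $\row$, i.e.\ $\row_b=\max_j\row_j$, and fix a threshold $t$ of order $\epsilon$. Let $B_r=\{\,i\in\supp(\x): \row_b-\row_i>t\,\}$ be the pure strategies in the support that are ``clearly'' not best responses. Since $\sum_{i\in\supp(\x)}\x_i(\row_b-\row_i)$ is precisely the row player's regret according to $\row$, which is at most $\frac{\epsilon^2}{8}$, a Markov-type bound gives $\mu_r:=\sum_{i\in B_r}\x_i\le\frac{\epsilon^2}{8t}=O(\epsilon)$. Form $\x'$ by deleting the mass of $B_r$ and piling it onto $b$, so that $\supp(\x')\subseteq(\supp(\x)\setminus B_r)\cup\{b\}$ and $\|\x-\x'\|_1=2\mu_r$. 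Do the symmetric thing with $\col$ to obtain $\y'$, and output $(\x',\y')$.

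To verify that $(\x',\y')$ is an $\epsilon$-WSNE for the true payoffs, take any $i\in\supp(\x')$: either $i=b$ or $\row_b-\row_i\le t$, so in both cases $\row_i\ge\row_b-t$. Now chain the errors: $\row_i$ is within $\frac{\epsilon^2}{24}$ of $R_i\cdot\y$; $\row_b$ is within $\frac{\epsilon^2}{24}$ of the true best-response value $\max_j R_j\cdot\y$ (this is essentially~\eqref{eqn:bestresponse}); and replacing $\y$ by $\y'$ moves both $R_i\cdot\y$ and $\max_j R_j\cdot\y$ by at most $\|\y-\y'\|_1=2\mu_c$, since all payoffs lie in $[-1,1]$. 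Summing, the true regret of row $i$ against $\y'$ is at most $t+O(\tfrac{\epsilon^2}{24})+O(\mu_c)$; with $t$ of order $\epsilon$ and the constant $24$ calibrated appropriately, this is at most $\epsilon$. The column player is handled identically with $\col$, $C$ and $\x'$ in place of $\row$, $R$ and $\y'$. Hence every strategy in $\supp(\x')$ is an $\epsilon$-best response to $\y'$ and every strategy in $\supp(\y')$ is an $\epsilon$-best response to $\x'$, so $(\x',\y')$ is an $\epsilon$-WSNE, obtained with no further queries.

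The one delicate point — and the reason for the slightly awkward constant $\frac{\epsilon^2}{24}$ — is the bookkeeping in the last step. There are three independent sources of loss: the approximation quality of the equilibrium, two separate uses of the approximate payoff vectors (translating $\row_i$ to $R_i\cdot\y$ and identifying a near-best response), and the $L_1$ perturbation $2\mu_r$, $2\mu_c$ incurred by shifting probability off the bad strategies on each side. One has to choose the threshold $t$ so that the Markov bound on $\mu_r,\mu_c$ and the final WSNE check both go through with the same $t$, and then check that all of these terms, together with the cross-terms coming from the mass-shift on the \emph{other} player's side, add up to at most $\epsilon$. Deciding which quantity to threshold against (the gap according to $\row$, not the true gap) and getting these inequalities to close is where the care lies; everything else is routine.
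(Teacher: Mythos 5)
Your proposal is correct and follows essentially the same route as the paper's proof: bound the regret according to $\row$ by $\frac{\epsilon^2}{8}$, threshold the gap $\row_{j^*}-\row_i$ at $t=\frac{\epsilon}{4}$ to define the bad set, use Markov's inequality to show the bad set carries at most $\frac{\epsilon}{2}$ probability, shift that mass away, and absorb the resulting $L_1$ perturbation of the opponent's strategy into the final $\epsilon$. The only difference is that you leave the constants as "to be calibrated" where the paper instantiates them explicitly (and is itself somewhat cavalier about the factor lost in the final perturbation step).
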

\begin{proof}
Let $i^*$ be an actual best response against $\y$, and let $j^*$ be a best
response according to $\row$. We define the set $B = \{j \; : \; \row_{j^*} >
\row_{j} + \frac{\epsilon}{4}\}$, and our first task is to show that $B$
contains every row $j$ that is not a $\frac{\epsilon}{2}$-best response against
$\y$.

Note that $|\row_{j^*} - R_{i^*} \cdot \y| \le \frac{\epsilon^{2}}{24}$. 
Furthermore, note that for every row $j$ we have:
\begin{equation*}
R_{i^*} \cdot \y - R_j \cdot \y \le \row_{j^*} - \row_j  + \frac{\epsilon^2}{12}.
\end{equation*}
Therefore, if row $j$ satisfies $\row_{j^*} \le \row_{j} + \frac{\epsilon}{4}$
then:
\begin{align*}
R_{i^*} \cdot \y - R_j \cdot \y &\le \frac{\epsilon}{4} + \frac{\epsilon^2}{12} \\
&\le \frac{\epsilon}{2}. 
\end{align*}
This implies that $j$ is an $\frac{\epsilon}{2}$-best response against $\y$. So,
the set $B$ must contain every row $j$ that is not an $\frac{\epsilon}{2}$-best
response against $\y$.

Define the random variable $Y = R_{i^*} \cdot \y - \x \cdot R \cdot \y$ and the
random variable $X = \row_{j^*} - \x \cdot \row_j$. Note that, since $\x$ is a
$\frac{\epsilon^2}{24}$-Nash equilibrium, we know that the row player suffers
regret at most $\frac{\epsilon^2}{24}$ under $(\x, \y)$, and therefore we have
$E[Y] \le \frac{\epsilon^2}{24}$. Furthermore, since $X$ is the regret according
to $\row$, we know that:
\begin{align*}
E[X] &\le E[Y] + \frac{\epsilon^2}{12} \\
&=  \frac{\epsilon^2}{8}.
\end{align*}
By applying Markov's inequality, we obtain:
\begin{align*}
\pr(X \ge \frac{\epsilon}{4}) &\le \left. \frac{\epsilon^2}{8} \middle/
\frac{\epsilon}{4} \right. \\
& = \frac{\epsilon}{2}.
\end{align*}
Therefore, $\x$ must assign at most $\frac{\epsilon}{2}$ probability to rows in
$B$. We define $\x'$ to be a modification of $\x$ where all probability
assigned to rows in $B$ is is shifted arbitrarily to rows that are not in $B$. 

Now consider the column player. If we define $B' = \{i \; : \; \col_{i^*} >
\col_{i} + \frac{\epsilon}{4}\}$, then we can use the same argument as above to
prove that $B'$ contains every column that is not a $\frac{\epsilon}{2}$-best
response against $\x$, and that the column player assigns at most
$\frac{\epsilon}{2}$ probability to the columns in $B'$. Similarly, we define
$\y'$ to be a modification of $\y$ where all probability assigned to columns in
$B'$ is shifted arbitrarily to columns not in $B'$.

Note that every row in $\supp(\x')$ is a $\frac{\epsilon}{2}$-best response
against $\y$. Since $\y'$ differs from $\y$ by a shift of at most
$\frac{\epsilon}{2}$ probability, we have that every row in $\supp(\x')$ is an
$\epsilon$-best response against $\y'$. Using the same technique, we can argue
that every row in $\supp(\y')$ is an $\epsilon$-best response against $\x'$,
and therefore $(\x', \y')$ is an $\epsilon$-WSNE. \qed
\end{proof}

By using the reduction from Lemma~\ref{lem:wsneconvert}, it is now easy to see
that we can compute an $\epsilon$-WSNE of a zero-sum bimatrix game. The
following corollary is a combination of Corollary~\ref{cor:zerosum} and
Lemma~\ref{lem:wsneconvert}.

\begin{corollary}
\label{cor:wsne}
Given a $k \times k$ zero-sum bimatrix game, with probability at least $(1 -
k^{-\frac{1}{8}})(1 - \frac{2}{k})^2$, we can compute an
$\epsilon$-WSNE $(\x, \y)$ using $O(\frac{k \cdot \log
k}{\epsilon^4})$ payoff queries.
\end{corollary}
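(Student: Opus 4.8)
The plan is to derive Corollary~\ref{cor:wsne} directly by composing the two results that precede it. First I would observe that Lemma~\ref{lem:wsneconvert} requires as input an $\frac{(\epsilon')^2}{24}$-Nash equilibrium together with $\frac{(\epsilon')^2}{24}$-approximate payoff vectors for both players (where I write $\epsilon'$ for the target WSNE accuracy). So the first step is to apply Corollary~\ref{cor:zerosum} with the accuracy parameter set to $\delta := \frac{(\epsilon')^2}{24}$ rather than to $\epsilon'$ itself. That corollary yields, with probability at least $(1 - k^{-1/8})(1 - \frac{2}{k})^2$, a $\delta$-Nash equilibrium $(\x, \y)$ of the zero-sum game together with $\delta$-approximate payoff vectors $\row$ and $\col$ for the two players, using $O(\frac{k \cdot \log k}{\delta^2})$ payoff queries.

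The second step is a query-count bookkeeping calculation: substituting $\delta = \frac{(\epsilon')^2}{24}$ gives
\begin{equation*}
O\!\left(\frac{k \cdot \log k}{\delta^2}\right) = O\!\left(\frac{k \cdot \log k}{((\epsilon')^2/24)^2}\right) = O\!\left(\frac{k \cdot \log k}{(\epsilon')^4}\right),
\end{equation*}
since the constant $24$ is absorbed into the big-$O$. The third step is to feed $(\x, \y)$, $\row$, $\col$ into Lemma~\ref{lem:wsneconvert}, which produces an $\epsilon'$-WSNE $(\x', \y')$ of the game \emph{without making any further payoff queries}; hence the total query count stays $O(\frac{k \cdot \log k}{(\epsilon')^4})$ and the success probability is unchanged, because the only randomness is in the query phase of Corollary~\ref{cor:zerosum}. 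Renaming $\epsilon'$ back to $\epsilon$ gives exactly the statement of the corollary.

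There is really no substantive obstacle here — the corollary is a pure composition — so the only thing to be careful about is that the hypotheses of Lemma~\ref{lem:wsneconvert} are met by what Corollary~\ref{cor:zerosum} delivers: the lemma needs the \emph{same} parameter $\frac{\epsilon^2}{24}$ for both the Nash-equilibrium accuracy and the payoff-vector accuracy, and Corollary~\ref{cor:zerosum} indeed returns both at the common accuracy we request. One should also note in passing that Lemma~\ref{lem:wsneconvert} holds for general bimatrix games, so in particular for zero-sum ones, and that all payoffs here lie in $[-1,1]$ as assumed throughout this section, which is consistent with the range assumptions used in deriving Corollary~\ref{cor:zerosum}. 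With those checks in place the proof is a one-line appeal to the two earlier results.
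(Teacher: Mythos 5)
Your proposal is correct and is exactly the argument the paper intends: the corollary is stated as a direct combination of Corollary~\ref{cor:zerosum} (invoked at accuracy $\frac{\epsilon^2}{24}$, giving the $O(\frac{k\cdot\log k}{\epsilon^4})$ query count) with the query-free conversion of Lemma~\ref{lem:wsneconvert}. Your bookkeeping of the parameter substitution and the unchanged success probability matches what the paper leaves implicit.
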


\section{A randomized algorithm for finding a $(\frac{3 - \sqrt{5}}{2} + \epsilon)$-Nash equilibrium}
\label{sec:enbm}

In this section, we present a randomized payoff-query efficient algorithm for
finding a $(\frac{3 - \sqrt{5}}{2} + \epsilon)$-Nash equilibrium in a bimatrix
game, where $\frac{3 - \sqrt{5}}{2} \approx 0.38197$. We adapt an algorithm of
Bosse, Byrka, and Markakis~\cite{BBM10} (henceforth referred to as the BBM
algorithm) for finding a $(\frac{3 - \sqrt{5}}{2} + \epsilon)$-Nash equilibrium
in a bimatrix game. The BBM algorithm solves a zero-sum game, and then makes a
decision based on the regret suffered by the players. We must adapt the
algorithm to work with approximate payoff vectors.

Let $(R, C)$ be a $k \times k$ bimatrix game, and define $D = R - C$. Let
$\alpha \in [0, 1]$ be a parameter that will be fixed later. The algorithm is as
follows.

\begin{enumerate}
\item Apply Theorem~\ref{thm:zerosum} for $\frac{\epsilon}{4}$ to the game $(D,
-D)$ in order to obtain $(\x, \y)$, which is a $\frac{\epsilon}{4}$-Nash
equilibrium. Then apply Lemma~\ref{lem:rowapprox} in order to find $\row$ and
$\col$, which are $\frac{\epsilon}{4}$-approximate payoff vectors for when $(\x,
\y)$  is played in the game $(R, C)$. This step succeeds with probability $(1 -
k^{-\frac{1}{8}})(1 - \frac{2}{k})^2$ and uses $O(\frac{k \cdot \log
k}{\epsilon^2})$ payoff queries.
\item We will assume, without loss of generality, that the regret according to
$\row$ is larger than the regret according to $\col$. Let row~$b$ be a best
response according to $\row$ in the game $(R, C)$, and let $g$ be the regret
according to $\row$. Since $b$ and $g$ are determined by $\row$, this step
requires no payoff queries.
\item Let $d$ be a best response for the column player against row $b$ in the
game $(R, C)$. This can be found using $k$ payoff queries, by querying every
column in row $b$.
\item If $g \le \alpha$, then output $(\x, \y)$. Otherwise, let $\delta =
\frac{1 - g}{2 - g}$ and output the following strategy, denoted as $(\hat{\x},
\hat{\y})$: the row player plays $b$ as a pure strategy, and the column player
plays $\y$ with probability $(1 - \delta)$ and $d$ with probability $\delta$.
This step uses no payoff queries.
\end{enumerate}

The following lemma shows that this algorithm is correct, in the case 
Theorem~\ref{thm:zerosum} 
 succeeds in finding an approximate Nash equilibrium
of $(D, -D)$, and Lemma~\ref{lem:rowapprox} succeeds in finding the approximate
payoff vectors in $(R, C)$.

\begin{lemma}
If $(\x, \y)$ is a $\frac{\epsilon}{4}$-Nash equilibrium of $(D, -D)$, and
$\row$ and $\col$ are $\frac{\epsilon}{4}$-approximate payoff vectors for $(\x,
\y)$ in $(R, C)$, then the algorithm outputs a $(\max(\alpha, \frac{1 -
\alpha}{2 - \alpha}) + \epsilon)$-Nash equilibrium of $(R, C)$.
\end{lemma}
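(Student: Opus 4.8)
The plan is to verify the two cases of the algorithm separately, mirroring the BBM analysis but everywhere replacing exact payoffs by the approximate vectors $\row$ and $\col$, and absorbing the resulting slack into the additive $\epsilon$. Throughout I would use the bookkeeping facts established at the end of Section~\ref{sec:gr}: a best response according to $\row$ has payoff within $\frac{\epsilon}{4}$ of the true best-response payoff (equation~\eqref{eqn:bestresponse}), and the regret according to $\row$ is within $2\cdot\frac{\epsilon}{4}=\frac{\epsilon}{2}$ of the true regret under $(\x,\y)$ in $(R,C)$; symmetric statements hold for $\col$. I would also record the WLOG assumption from step~2, namely that the regret according to $\row$ is at least the regret according to $\col$, so that both players' true regrets in $(R,C)$ are at most $g+O(\epsilon)$.

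First, the case $g\le\alpha$, where the algorithm outputs $(\x,\y)$ itself. Here the key point is that $(\x,\y)$ is an (approximate) equilibrium of the zero-sum game $(D,-D)$ with $D=R-C$; the standard BBM observation is that the sum of the two players' regrets in $(R,C)$ equals (up to the zero-sum slack) the regret of one player in $(D,-D)$, so each player's regret in $(R,C)$ is controlled by the larger of the two, which is $g$. Concretely I would show the row player's true regret is at most $g+\frac{\epsilon}{2}$ directly from the definition of $g$ and the approximation bounds, and the column player's true regret is at most the row player's regret plus $O(\epsilon)$ using the zero-sum structure of $(D,-D)$ and the $\frac{\epsilon}{4}$-Nash guarantee. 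Since $g\le\alpha$, this gives an $(\alpha+\epsilon)$-Nash equilibrium, which is within the claimed bound because $\alpha\le\max(\alpha,\frac{1-\alpha}{2-\alpha})$.

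Second, the case $g>\alpha$, where the output is $(\hat\x,\hat\y)$ with the row player pure on $b$ and the column player mixing $\y$ and the pure best response $d$ against $b$ with weight $\delta=\frac{1-g}{2-g}$. The row player is pure on a single row, so I only need that $b$ is an (approximate) best response against $\hat\y$: against $\y$ it loses at most $g+O(\epsilon)$ by definition of $g$, and shifting $\delta$ mass onto $d$ can cost the row player at most $\delta$ more, so the row player's regret is at most $\delta+g+O(\epsilon)$; plugging in $\delta=\frac{1-g}{2-g}$ and simplifying gives $\frac{1-g}{2-g}+g = \frac{1}{2-g}$ after algebra — wait, one must be careful: the BBM bound is that this quantity equals $\frac{1-g}{2-g}$ when combined with the fact that $b$'s payoff is already high because $g$ is large, so the regret against $\hat\y$ is bounded by $\frac{1-g}{2-g}+O(\epsilon)$. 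For the column player, with the row player pure on $b$, playing $d$ is an exact best response, and the column player puts the remaining $1-\delta$ weight on $\y$; one shows the loss from the $\y$-part is at most $(1-\delta)\cdot 1 = 1-\delta = \frac{1}{2-g}$, and then the worst of the two is again $\frac{1-g}{2-g}$ — here I would follow BBM's balancing computation. Since $g>\alpha$ and $\frac{1-g}{2-g}$ is decreasing in $g$, we get regret at most $\frac{1-\alpha}{2-\alpha}+O(\epsilon)$, which is within the claimed bound. Finally I would choose the $O(\epsilon)$ slack terms (all multiples of $\epsilon$ coming from the $\frac{\epsilon}{4}$-approximations) to sum to at most $\epsilon$, or equivalently remark that running the algorithm with parameter $\epsilon' = c\epsilon$ for a suitable constant gives the stated $+\epsilon$.

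The main obstacle I anticipate is the column-player regret bound in the second case: unlike the row player, the column player is genuinely mixed ($\y$ with weight $1-\delta$, $d$ with weight $\delta$), so I must argue that $\y$'s contribution does not create too much regret against the pure row $b$, and this is exactly where the zero-sum structure of $(D,-D)$ and the assumption that $\row$'s regret dominates $\col$'s regret must be combined carefully — getting the constant right so that the two regret expressions balance at $\frac{1-g}{2-g}$ is the delicate part, and it is also where the $\frac{\epsilon}{4}$ approximation errors must be tracked most carefully since they enter through both $g$ and the payoff of $d$.
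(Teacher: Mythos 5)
Your overall strategy (follow BBM, replace exact payoffs by $\row$ and $\col$, absorb the slack into the additive $\epsilon$) is the paper's strategy, and your treatment of the case $g\le\alpha$ is essentially right. But both key computations in the case $g>\alpha$ have genuine gaps. For the row player, you claim that $b$ loses ``at most $g+O(\epsilon)$'' against $\y$, notice that the resulting arithmetic gives $\frac{1}{2-g}$ rather than $\frac{1-g}{2-g}$, and then wave this away by appealing to ``$b$'s payoff is already high because $g$ is large.'' The correct point is simpler and different: $g$ is the regret of the \emph{mixed} strategy $\x$, which is discarded in this branch; the row player now plays $b$, which is a best response \emph{according to $\row$}, so against the $\y$-part of $\hat\y$ it loses only the approximation error $\frac{\epsilon}{2}$ (since $\row_{\hat b}-\row_b\le 0$ for any row $\hat b$), and against the $d$-part it loses at most $\delta$. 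The total is $\delta+\frac{\epsilon}{2}=\frac{1-g}{2-g}+\frac{\epsilon}{2}$, with no $g$ term to cancel.

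For the column player, your stated bound on the loss from the $\y$-part, $(1-\delta)\cdot 1=\frac{1}{2-g}$, is strictly weaker than $\frac{1-g}{2-g}$ and does not suffice; deferring to ``BBM's balancing computation'' leaves out exactly the step that makes the lemma true. The missing ingredient is a lower bound $C_b\cdot\y\ge g-\epsilon$, obtained by combining three facts: the $\frac{\epsilon}{4}$-equilibrium condition in $(D,-D)$ applied to row $b$, namely $D_b\cdot\y\le\x\cdot D\cdot\y+\frac{\epsilon}{4}$ with $D=R-C$, which rearranges to $C_b\cdot\y\ge R_b\cdot\y-\x\cdot R\cdot\y+\x\cdot C\cdot\y-\frac{\epsilon}{4}$; the approximation guarantee giving $R_b\cdot\y-\x\cdot R\cdot\y\ge g-\frac{\epsilon}{2}$; and nonnegativity of payoffs giving $\x\cdot C\cdot\y\ge 0$. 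Only with this does the column player's regret $(1-\delta)(C_{b,d}-C_b\cdot\y)\le(1-\delta)(1-g+\epsilon)=\frac{1-g}{2-g}+(1-\delta)\epsilon$ come out right. You correctly identified this as the delicate part, but identifying it is not the same as supplying it, and as written your bound for this case does not close.
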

\begin{proof}
First consider the case where $g \le \alpha$. 
Let $i$ be a best response for the
row player against $\y$. By applying Equation~\eqref{eqn:bestresponse}, we can 
obtain the following bound on the row player's regret:
\begin{align*}
R_i \cdot \y - \x \cdot R \cdot \y 
& \le \row_b - \x \cdot \row + \frac{\epsilon}{2} \\
& = g + \frac{\epsilon}{2} \\
&\le \alpha + \frac{\epsilon}{2}
\end{align*}
Thus, the row player's regret at most $\alpha + \frac{\epsilon}{2}$
under $(\x, \y)$. Since, by assumption, the regret according to $\col$ is at
most $g$, we can use the same argument to prove that the column player's regret
is at most $\alpha + \frac{\epsilon}{2}$. Therefore, $(\x, \y)$ is a
$(\alpha + \frac{\epsilon}{2})$-Nash equilibrium.

We now consider the case where $g > \alpha$. We first consider the row player.
Let $\hat{b}$ be a best response against $\hat{\y}$. The row player's regret
in $(\hat{\x}, \hat{\y})$ is:
\begin{align*}
R_{\hat{b}} \cdot \hat{\y} - R_b \cdot \hat{\y} &= (1 - \delta)(R_{\hat{b}} \cdot \y - R_b \cdot \y) + \delta \cdot (R_{\hat{b}, d} - R_{b, d}) \\
& \le (1 - \delta)(\row_{\hat{b}} - \row_b + \frac{\epsilon}{2}) + \delta \cdot (R_{\hat{b}, d} - R_{b, d}) \\
& \le (1 - \delta) \cdot \frac{\epsilon}{2} + \delta \cdot (R_{\hat{b}, d} - R_{b, d})
\\
& \le \frac{\epsilon}{2} + \delta.
\end{align*}
The first inequality holds because $\row$ is is an
$\frac{\epsilon}{4}$-approximate payoff vector against~$\y$. The second
inequality holds because $b = \max_{i \in [k]} \row_i$. The third inequality
holds because $1 - \delta \le 1$ and $R_{i, j} \in [0, 1]$ for all $i$ and $j$.

So, we have shown that the row player's regret is at most 
\begin{align*}
\delta + \frac{\epsilon}{2} &= \frac{1 - g}{2 - g} + \frac{\epsilon}{2} \\
& \le \frac{1 - \alpha}{2 - \alpha} + \frac{\epsilon}{2}
\end{align*} 
The inequality above holds because $\frac{1 - g}{2 - g}$ is a decreasing
function when $g \le 1$, and because
$g > \alpha$.  

We now consider the column player. Since $(\x, \y)$ is an
$\frac{\epsilon}{4}$-Nash equilibrium of $(D, -D)$, the row player's regret
under $(\x, \y)$ in $(D, -D)$ must be at most $\frac{\epsilon}{4}$. Thus, all
rows $i$ must satisfy $D_i \cdot \y \le \x \cdot D \cdot \y +
\frac{\epsilon}{4}$. Applying this inequality for row $b$ gives the following:
\begin{align}
\nonumber
D_b \cdot \y &\le \x \cdot D \cdot \y + \frac{\epsilon}{4} \\
\nonumber
(R - C)_b \cdot \y &\le \x \cdot (R - C) \cdot \y + \frac{\epsilon}{4} \\
\label{eqn:up}
 R_b \cdot \y - \x \cdot R \cdot \y + \x \cdot C \cdot \y - \frac{\epsilon}{4}
&\le C_b \cdot \y 
\end{align}
Since $\row$ is a $\frac{\epsilon}{4}$-approximate payoff vector we have:
\begin{align*}
R_b \cdot \y - \x \cdot R \cdot \y &\ge \row_b - \x \cdot \row -
\frac{\epsilon}{2} \\ 
&= g - \frac{\epsilon}{2} 
\end{align*}
Substituting this into Equation~\eqref{eqn:up}
\begin{align}
\nonumber
C_b \cdot \y
\nonumber
& \ge g + \x \cdot C \cdot \y  - \frac{3}{4}\epsilon \\
\label{eqn:lol}
& \ge g - \epsilon
\end{align}

Recall that $d$ is an actual best response against~$b$ for the column player.
Since $\hat{\x}$ plays $b$ as a pure strategy, we have therefore have that $d$
is a best response against $\hat{\x}$. Moreover, observe that
\begin{align} 
\nonumber
(1-\delta)(1 - g) &= \frac{(1 - g)(2-g) + (1-g)^2}{2-g} \\
\label{eqn:down}
&= \frac{1-g}{2-g}.
\end{align}
Thus, the column player's regret when playing $\hat{\y}$
against $\hat{\x}$ is:
\begin{align*}
(\hat{\x} \cdot C)_{d} - \hat{\x} \cdot C \cdot \hat{\y} &= 
C_{b, d} - C_{b} \cdot \hat{\y} \\
& = C_{b, d} - ((1 - \delta) \cdot C_b \cdot \y + \delta \cdot C_{b, d}) \\
& = (1 - \delta)(C_{b, d} - C_b \cdot \y) \\
& \le (1 - \delta)(1 - g + \epsilon)  & \text{[By Equation~\eqref{eqn:lol}]}\\
&\le \frac{1 - g}{2 - g} + \epsilon & \text{[By Equation~\eqref{eqn:down}]} \\
&< \frac{1 - \alpha}{2 - \alpha} + \epsilon.
\end{align*}
We have now shown that, in the case where $g > \alpha$, both players have regret
at most $\frac{1 - \alpha}{2 - \alpha} + \epsilon$, which implies that
$(\hat{x}, \hat{y})$ is a $(\frac{1 - \alpha}{2 - \alpha} + \epsilon)$-Nash
equilibrium. \qed
\end{proof}

As shown by Bosse, Byrka, and Markakis, we have that the expression
$\max(\alpha, \frac{1 - \alpha}{2 - \alpha})$ is minimized when $\alpha =
\frac{3 - \sqrt{5}}{2} \approx 0.38197$. Thus, we have the
following theorem.

\begin{theorem}
Given a $k \times k$ bimatrix game, with probability at least $(1 -
k^{-\frac{1}{8}})(1 - \frac{2}{k})^2$, we can compute a $(\frac{3 -
\sqrt{5}}{2} + \epsilon)$-Nash equilibrium using $O(\frac{k \cdot \log
k}{\epsilon^2})$ payoff queries.
\end{theorem}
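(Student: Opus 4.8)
The plan is to simply assemble the pieces that have already been established. The theorem asserts that, with probability at least $(1 - k^{-1/8})(1 - \frac{2}{k})^2$, the four-step algorithm described above computes a $(\frac{3-\sqrt{5}}{2} + \epsilon)$-Nash equilibrium using $O(\frac{k \cdot \log k}{\epsilon^2})$ payoff queries. The correctness of the algorithm has essentially been reduced to two ingredients: the preceding lemma, which says that \emph{conditional on} Step~1 succeeding — i.e.\ conditional on $(\x, \y)$ being a $\frac{\epsilon}{4}$-Nash equilibrium of $(D, -D)$ and on $\row$, $\col$ being $\frac{\epsilon}{4}$-approximate payoff vectors for $(\x, \y)$ in $(R, C)$ — the output is a $(\max(\alpha, \frac{1-\alpha}{2-\alpha}) + \epsilon)$-Nash equilibrium; and the optimization fact of Bosse, Byrka, and Markakis that $\max(\alpha, \frac{1-\alpha}{2-\alpha})$ is minimized at $\alpha = \frac{3-\sqrt{5}}{2}$, where it takes the value $\frac{3-\sqrt{5}}{2}$ itself (since at that $\alpha$ the two branches coincide). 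So I would fix $\alpha = \frac{3-\sqrt{5}}{2}$ in the algorithm; then the lemma immediately gives that the output is a $(\frac{3-\sqrt{5}}{2} + \epsilon)$-Nash equilibrium whenever Step~1 succeeds.

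Next I would account for the failure probability and the query count. Step~1 invokes Theorem~\ref{thm:zerosum} with accuracy parameter $\frac{\epsilon}{4}$ on the game $(D, -D)$, which has payoffs in $[-1, 1]$ as required, and this succeeds with probability $1 - k^{-1/8}$ using $O(\frac{k \cdot \log k}{\epsilon^2})$ queries; it then invokes Lemma~\ref{lem:rowapprox} twice (once for each player, using the swap-of-roles remark), each succeeding with probability at least $1 - \frac{2}{k}$ and each using $O(\frac{k \cdot \log k}{\epsilon^2})$ queries. By a union bound these events jointly hold with probability at least $(1 - k^{-1/8})(1 - \frac{2}{k})^2$. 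Steps~2 and~4 make no queries, and Step~3 makes exactly $k$ additional queries, which is absorbed into the $O(\frac{k \cdot \log k}{\epsilon^2})$ term. Hence the total query budget is $O(\frac{k \cdot \log k}{\epsilon^2})$, and conditioned on the success event (which occurs with the stated probability) the preceding lemma guarantees the output is a $(\frac{3-\sqrt{5}}{2} + \epsilon)$-Nash equilibrium, which proves the theorem.

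There is really no substantive obstacle here: all the hard work — the regret analysis in both the $g \le \alpha$ and $g > \alpha$ cases, including the bookkeeping with approximate payoff vectors and the identity $(1-\delta)(1-g) = \frac{1-g}{2-g}$ — has already been carried out in the lemma, and the choice of $\alpha$ is dictated by the BBM optimization. The only points requiring a little care are: (i) confirming that the three randomized subroutines are run with the right accuracy parameters so that their errors compose correctly (each is run at $\frac{\epsilon}{4}$, and the lemma is stated precisely for $\frac{\epsilon}{4}$-approximate inputs, so this matches); and (ii) noting that the failure events are combined by a union bound rather than assumed independent, so the stated probability $(1 - k^{-1/8})(1 - \frac{2}{k})^2$ is in fact a lower bound via $\Pr[\bigcap E_i] \ge 1 - \sum \Pr[\overline{E_i}] \ge \prod \Pr[E_i]$ when the $\Pr[E_i]$ are close to $1$. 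With those observations in place the theorem follows directly.
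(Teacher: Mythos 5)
Your proposal matches the paper's own proof, which is exactly this assembly: fix $\alpha = \frac{3 - \sqrt{5}}{2}$ to minimize $\max(\alpha, \frac{1 - \alpha}{2 - \alpha})$ (the BBM optimization), invoke the preceding lemma for correctness, and carry over the query count and success probability already recorded in Step~1 of the algorithm (Step~3's extra $k$ queries being absorbed). One small caveat on your point (ii): the inequality $1 - \sum \Pr[\overline{E_i}] \ge \prod \Pr[E_i]$ is reversed --- the product always dominates $1 - \sum$, so a union bound only yields the weaker additive bound $1 - k^{-1/8} - \frac{4}{k}$; the stated product form is instead justified by the independence of the fresh randomness used in the three subroutine invocations, which is how Step~1 and Corollary~\ref{cor:zerosum} implicitly obtain it.
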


\section{A randomized algorithm for finding finding a
$(\frac{2}{3}+\epsilon)$-well-supported Nash equilibrium}
\label{sec:wsnebm}

In this section, we give a randomized $O(\frac{k \cdot \log k}{\epsilon^4})$
payoff query algorithm for finding a $(\frac{2}{3} + \epsilon)$-WSNE equilibrium
in a bimatrix game, and we give a randomized $O(\frac{k \cdot \log
k}{\epsilon^4})$ payoff query algorithm for finding a $(\frac{1}{2} +
\epsilon)$-WSNE in a zero-one bimatrix game. We follow the algorithm of
Kontogiannis and Spirakis
(henceforth referred to as the KS algorithm) for finding a $\frac{2}{3}$-WSNE in
a bimatrix game~\cite{KS10}. Their approach can be summarised as follows: they
first perform a preprocessing step in which they check whether the game has a
pure $\frac{2}{3}$-WSNE. If it does not, then they construct a zero-sum game and
compute an exact Nash equilibrium for it. They show that, if the original game
does not have a pure $\frac{2}{3}$-WSNE, then an exact Nash equilibrium in the
zero-sum game is a $\frac{2}{3}$-WSNE in the original game.

There are two problems with this approach in the payoff query setting. Firstly,
it has been shown that finding an exact Nash equilibrium of a
$k \times k$ zero-sum game requires $k^2$ queries~\cite{FGGS13}.
To solve this problem, we substitute an
$\epsilon$-WSNE of the zero-sum game in place of an exact Nash equilibrium, and
we show that, after this substitution, the  KS algorithm will still produce a
well-supported Nash equilibrium of the original game. The second problem is that
we are unable to perform the preprocessing step in a query-efficient manner. A
naive algorithm would require $k^2$ payoff queries in order to verify whether
there is a pure $\frac{2}{3}$-WSNE, and it is not clear how this can be
improved. To solve this problem, we show that the preprocessing step does not
need to be carried out before the zero-sum game has been solved. Instead, we
find an $\epsilon$-WSNE in the zero-sum game, and then check whether it is a
$\frac{2}{3}$-WSNE in the original. If it is not, then we show how this
information can be used to find a pure $(\frac{2}{3} + \epsilon)$-WSNE in the
original game.

Let $(R, C)$ be a bimatrix game, where~$R$ is the payoff matrix of the row
player, and~$C$ is the payoff matrix of the column player. The KS algorithm uses
the following definitions.
\begin{align*}
D := \frac{1}{2}(R - C) && X := -\frac{1}{2}(R + C)
\end{align*}
Observe that $D = R + X$. The KS algorithm finds an exact Nash equilibrium of
the zero-sum game $(D, -D)$. In contrast to this, we do the following:
\begin{itemize}
\item we will apply Corollary~\ref{cor:wsne} to $(D, -D)$ to obtain an
$\frac{\epsilon}{10}$-WSNE for $(D, -D)$, which we denote as $(\x, \y)$.
\item We then obtain approximate payoff vectors for when $(\x, \y)$ is played in
$(R, C)$: we apply
Lemma~\ref{lem:rowapprox} to obtain $\frac{\epsilon}{10}$-approximate payoff
vectors $\row$, which approximates $R \cdot \y$, and $\col$, which approximates
$\x \cdot C$.
\end{itemize}
These two steps succeed with probability at least $(1 -
k^{-\frac{1}{8}})(1 - \frac{2}{k})^3$. We will fix $(\x, \y)$, $\row$,
and $\col$ for the rest of this section.

In the following lemma, we show how the preprocessing of the KS algorithm can be
delayed until after the zero-sum game has been solved. In particular, we show
that if there is a row in the support of $\x$ that is far from being an
approximate best response, then, in a query-efficient manner, we can find a pure
strategy profile $i,j \in [k]$ such that $(R + C)_{i,j}$ is large. The parameter
$z$ will allow us to apply this lemma for both the zero-one and general bimatrix
games: in the zero-one case we will set $z = 0.5$, and for general bimatrix
games we will set $z = \frac{2}{3}$.

\begin{lemma}
\label{lem:twoz}
Let $z \in [0, 1]$. If there is an $i' \in \supp(\x)$ and $i \in [k]$ such that
$\row_i > \row_{i'} + z - \frac{\epsilon}{5}$, then using $k$ payoff queries, we
can find a $j \in [k]$ such that $(R + C)_{i, j} > 2z - \epsilon$.
\end{lemma}
\begin{proof}
Since $\row$ is an $\frac{\epsilon}{10}$-approximate payoff vector for the row
player, and since 
$\row_i > \row_{i'} + z - \frac{\epsilon}{5}$
we have:
\begin{equation}
\label{eqn:one}
R_i \cdot \y > R_{i'} \cdot \y + z - \frac{2}{5} \epsilon. 
\end{equation}
Since $\x$ is an $\frac{\epsilon}{10}$-WSNE in $(D, -D)$, and since $i' \in
\supp(\x)$ by assumption, we have:
\begin{align*}
D_{i'} \cdot \y &\ge D_{i} \cdot \y - \frac{\epsilon}{10} \\
(R + X)_{i'} \cdot \y &\ge (R + X)_{i} \cdot \y - \frac{\epsilon}{10} \\
R_{i'} \cdot \y &\ge R_{i} \cdot \y - (X_{i'} - X_{i}) \cdot \y - \frac{\epsilon}{10}.
\end{align*}
Combining this with Equation~\eqref{eqn:one} yields:
\begin{align*}
R_{i'} \cdot \y &> R_{i'} \cdot \y + z - \frac{2\epsilon}{5} - (X_{i'} - X_{i}) \cdot
\y  - \frac{\epsilon}{10}\\
(X_{i'} - X_{i}) \cdot \y &> z - \frac{\epsilon}{2}.
\end{align*}
Since $X = -\frac{1}{2}(R + C)$, we have that $X_{i'} \cdot \y \in [-1, 0]$ and
hence $X_{i'} \cdot \y \le 0$. Therefore, we have:
\begin{align*}
- X_{i} \cdot \y &> z - \frac{\epsilon}{2} \\
\frac{1}{2}(R + C)_i \cdot \y &> z - \frac{\epsilon}{2} \\
(R + C)_i \cdot \y &> 2z - \epsilon.
\end{align*}
In order for this inequality to hold, there must be at least one column $j$
such that $(R + C)_{i, j} > 2z - \epsilon$. 
Since we know row $i$, we can find column $j$ using $k$ payoff queries. \qed
\end{proof}

Note that, by swapping the roles of the two players, Lemma~\ref{lem:twoz} can
also be applied for the column player. We can now prove the main result of this
section. 

\begin{theorem}
Let $(R, C)$ be a $k \times k$ bimatrix game. With probability at least $(1 -
k^{-\frac{1}{8}})(1 - \frac{2}{k})^3$ and using 
$O(\frac{k \cdot \log k}{\epsilon^4})$ we can compute
a $(\frac{2}{3} + \epsilon)$-WSNE if $(R, C)$ is a general bimatrix game,
or
 a $(\frac{1}{2} + \epsilon)$-WSNE if $(R, C)$ is a zero-one bimatrix game.
\end{theorem}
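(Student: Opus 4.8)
The plan is to follow the Kontogiannis--Spirakis (KS) argument, but with the zero-sum game solved only approximately (via Corollary~\ref{cor:wsne}) and the preprocessing step deferred. Recall we have fixed $(\x,\y)$, an $\frac{\epsilon}{10}$-WSNE of $(D,-D)$, together with $\frac{\epsilon}{10}$-approximate payoff vectors $\row$ and $\col$. Set $z = \frac{2}{3}$ in the general case and $z = \frac{1}{2}$ in the zero-one case. First I would check, purely from $\row$ and $\col$ (no queries), whether every $i' \in \supp(\x)$ satisfies $\row_i \le \row_{i'} + z - \frac{\epsilon}{5}$ for all $i \in [k]$, and symmetrically for the column player using $\col$.

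\textbf{Case 1: the check passes.} Then for every $i' \in \supp(\x)$ and every $i$, translating through the fact that $\row$ is $\frac{\epsilon}{10}$-approximate gives $R_i \cdot \y \le R_{i'} \cdot \y + z - \frac{\epsilon}{5} + \frac{\epsilon}{5} = R_{i'}\cdot\y + z$, so every pure strategy in $\supp(\x)$ is a $z$-best response against $\y$; likewise every strategy in $\supp(\y)$ is a $z$-best response against $\x$. Hence $(\x,\y)$ is already a $z$-WSNE of $(R,C)$, which is a $(\frac{2}{3}+\epsilon)$-WSNE (resp.\ $(\frac{1}{2}+\epsilon)$-WSNE). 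Here I would want to double-check that the KS structural fact is really captured at the level of the $D$-game WSNE and the approximate vectors --- this is where I would be most careful, since the original KS proof uses an \emph{exact} equilibrium of $(D,-D)$ and exact payoffs, and the slack budgeted as $\frac{\epsilon}{10}$ in both the WSNE and the approximation must be shown to suffice; I expect the constants to work out exactly because $z - \frac{\epsilon}{5}$ plus two $\frac{\epsilon}{10}$ approximation errors returns us to $z$.

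\textbf{Case 2: the check fails.} Then there is either an $i' \in \supp(\x)$ and an $i$ with $\row_i > \row_{i'} + z - \frac{\epsilon}{5}$, or the analogous situation on the column side. Apply Lemma~\ref{lem:twoz} (or its column-player analogue): using $k$ payoff queries we obtain a pure profile $(i,j)$ with $(R+C)_{i,j} > 2z - \epsilon$. In the general case $2z - \epsilon = \frac{4}{3} - \epsilon$, so $R_{i,j} + C_{i,j} > \frac{4}{3} - \epsilon$, which forces $\min(R_{i,j}, C_{i,j}) > \frac{1}{3} - \epsilon$; I would then query the rest of row $i$ and column $j$ ($2k$ more queries) and argue that the pure profile $(i,j)$ is a $(\frac{2}{3}+\epsilon)$-WSNE: the row player's best response against column $j$ exceeds $R_{i,j}$ by at most $1 - (\frac{1}{3}-\epsilon) = \frac{2}{3}+\epsilon$, and symmetrically for the column player. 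In the zero-one case $2z - \epsilon = 1 - \epsilon$, and since payoffs are in $\{0,1\}$ we get $R_{i,j} = C_{i,j} = 1$ for $\epsilon < 1$, so $(i,j)$ is a pure Nash equilibrium, in particular a $(\frac{1}{2}+\epsilon)$-WSNE.

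For the query count: Corollary~\ref{cor:wsne} and Lemma~\ref{lem:rowapprox} (applied to both players with parameter $\frac{\epsilon}{10}$) together use $O(\frac{k \log k}{\epsilon^4})$ queries and succeed with probability at least $(1 - k^{-1/8})(1-\frac{2}{k})^3$; the at most $3k$ further deterministic queries used in Case~2 are absorbed into this bound. Thus with the stated probability we produce the required WSNE, which gives the theorem. The main obstacle, as noted, is verifying that the KS ``no pure $z$-WSNE implies the zero-sum solution works'' dichotomy survives the three sources of error (approximate solution of $(D,-D)$, approximate payoff vectors, and the $-\frac{\epsilon}{5}$ threshold slack in the Case-1 test) with the $\frac{\epsilon}{10}$ budget --- but Lemma~\ref{lem:twoz} already packages exactly the Case-2 half of this dichotomy, so what remains is the routine (if bookkeeping-heavy) Case-1 verification.
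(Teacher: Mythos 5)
Your proposal follows the paper's proof almost exactly: solve $(D,-D)$ approximately via Corollary~\ref{cor:wsne}, obtain approximate payoff vectors, run the support check with slack $\frac{\epsilon}{5}$, and in the failure case invoke Lemma~\ref{lem:twoz}. Your worry about Case~1 is unfounded: no residual KS structure is needed there, since the check $\row_i \le \row_{i'} + z - \frac{\epsilon}{5}$ together with the two $\frac{\epsilon}{10}$ approximation errors directly certifies that every support element is a $z$-best response, which is precisely the definition of a $z$-WSNE. The general-bimatrix half of your argument is correct and matches the paper (the extra $2k$ queries you propose in Case~2 are unnecessary, since $R_{i,j} > \frac{1}{3}-\epsilon$ already bounds the regret against the best-response payoff of at most $1$, but they are harmless).

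There is, however, a genuine gap in your zero-one case. You take $z = \frac{1}{2}$, so Lemma~\ref{lem:twoz} only yields $(R+C)_{i,j} > 1 - \epsilon$. For zero-one payoffs this gives $(R+C)_{i,j} \ge 1$, not $(R+C)_{i,j} = 2$: the profile could have $R_{i,j}=1$ and $C_{i,j}=0$, in which case the column player's regret at the pure profile $(i,j)$ can be $1$, and $(i,j)$ is not a $(\frac{1}{2}+\epsilon)$-WSNE. Your claim that ``$R_{i,j} = C_{i,j} = 1$ for $\epsilon < 1$'' does not follow. The paper avoids this by running the Case-1 check with threshold $(\frac{1}{2}+\epsilon) - \frac{\epsilon}{5}$, i.e., applying Lemma~\ref{lem:twoz} with $z = \frac{1}{2}+\epsilon$, so that the failure case gives $(R+C)_{i,j} > 2(\frac{1}{2}+\epsilon) - \epsilon = 1 + \epsilon > 1$, which for integer payoffs forces $R_{i,j} = C_{i,j} = 1$, a pure exact Nash equilibrium; the success case then certifies a $(\frac{1}{2}+\epsilon)$-WSNE rather than a $\frac{1}{2}$-WSNE, which still matches the theorem. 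Adjusting your threshold accordingly repairs the argument.
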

\begin{proof}
As we have described, the algorithm spends
$O(\frac{k \cdot \log k}{\epsilon^4})$ in order to compute $(\x, \y)$, $\row$, and
$\col$.
We first prove the result for general bimatrix games.

Observe that, since since $\row$ is an $\frac{\epsilon}{10}$-approximate payoff
vector, if we have $\row_i \le \row_{i'} + \frac{2}{3} - \frac{\epsilon}{5}$ for
two rows $i, i' \in [k]$, then we have $R_i \cdot \y \le R_{i'} + \frac{2}{3}$.
So, our algorithm will 
check whether: 
\begin{itemize}
\item $\row_{i} \le \row_{i'} + \frac{2}{3} -
\frac{\epsilon}{5}$ for every $i \in [k]$ and every $i' \in \supp(\x)$, and
\item 
$\col_{i} \le \col_{i'} + \frac{2}{3} - \frac{\epsilon}{5}$ for every $i
\in [k]$ and every $i' \in \supp(\y)$. 
\end{itemize}
If both of these checks succeed, then $(\x, \y)$ is a $\frac{2}{3}$-WSNE in $(R,
C)$, and we are done. On the other hand, if either of the two checks fail, then
we can apply Lemma~\ref{lem:twoz} to obtain a pair $i, j \in [k]$ such that $(R
+ C)_{i, j} > \frac{4}{3} - \epsilon$. This implies that both $R_{i,j} >
\frac{1}{3} - \epsilon$ and $C_{i, j} > \frac{1}{3} - \epsilon$, and therefore
$i$ and $j$ form a pure $(\frac{2}{3} + \epsilon)$-WSNE. 

For zero-one games, the proof is similar. The algorithm will check whether
$\row_{i} \le \row_{i'} + (\frac{1}{2} + \epsilon) - \frac{\epsilon}{5}$ for
every $i \in [k]$ and every $i' \in \supp(\x)$, and whether $\col_{i} \le
\col_{i'} + (\frac{1}{2} + \epsilon) - \frac{\epsilon}{5}$ for every $i \in [k]$
and every $i' \in \supp(\y)$. If both of these checks succeed, then we have that
$(\x, \y)$ is a $(\frac{1}{2} + \epsilon)$-WSNE. Otherwise, we can apply
Lemma~\ref{lem:twoz} to obtain a pair $i, j \in [k]$ such that $(R + C)_{i, j} >
1 + 2 \epsilon - \epsilon > 1$. However, since this is a zero-one game, the only
way to have $(R + C)_{i, j} > 1$ is if $R_{i, j} = 1$ and $C_{i, j}  = 1$.
Therefore $(i, j)$ is a pure Nash equilibrium. 
\qed
\end{proof}

\section{Conclusion}

In this paper, we have given a complete characterization of the deterministic
query complexity of $\epsilon$-Nash equilibria for constant $\epsilon$, we have
given randomized upper bounds for both $\epsilon$-Nash equilibria and
$\epsilon$-WSNE, and we have initiated work on lower bounds against randomized
algorithms for finding $\epsilon$-Nash equilibria. 

There are many open problems arising from this work. In addition to ruling out
query-efficient randomized algorithms for finding exact Nash equilibria, our
lower bound in Section~\ref{sec:rlower} also rules out query efficient
algorithms for finding $\frac{1}{6k}$-Nash equilibria in $k \times k$
bimatrix games. The most obvious open problem stemming from this is to prove a
lower bound for randomized algorithms for constant approximations. At the very
least, it would be desirable to have matching $\Omega(k \cdot \log k)$ lower
bounds for our algorithms in Sections~\ref{sec:enbm} and~\ref{sec:wsnebm}. Also,
does there exist a constant $\epsilon < \frac{3 - \sqrt{5}}{2}$ for
which the randomized query complexity is $\omega(k \cdot \log k)$, or a constant
$\epsilon < \frac{2}{3}$ for which the randomized query complexity of finding an
$\epsilon$-WSNE is $\omega(k \cdot \log k)$?

Our result in Section~\ref{sec:epsdlower} completes the characterization of the
deterministic query complexity for $\epsilon$-Nash equilibria for constant
$\epsilon$, but we do not have such a characterization for $\epsilon$-WSNE. Of
course, since every $\epsilon$-WSNE is an $\epsilon$-Nash equilibrium, the lower
bound of Section~\ref{sec:epsdlower} applies for $\epsilon < \frac{1}{2}$, but
we do not know much about the deterministic query complexity of
$\epsilon$-WSNE with $\epsilon \ge \frac{1}{2}$. One easy initial observation on
this topic is the following lemma, proved in Appendix~\ref{app:easy}, which
shows that for every $\epsilon < 1$, the deterministic query complexity of
finding an $\epsilon$-WSNE is $\Omega(k)$. It has been shown that we can always
find a $(1 - \frac{1}{k})$-Nash equilibrium using no queries at
all~\cite{FGGS13}, so this lemma shows that we should expect $\epsilon$-WSNE
to behave differently to $\epsilon$-Nash equilibria when $\epsilon >
\frac{1}{2}$.

\begin{lemma}
\label{lem:shitty}
For every $\epsilon < 1$, the deterministic query complexity of finding an
$\epsilon$-WSNE in a $k \times k$ bimatrix game is at least $k - 1$, even in
zero-one games.
\end{lemma}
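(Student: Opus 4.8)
The plan is to use an adversary argument very similar in spirit to the one in Section~\ref{sec:epsdlower}, but now exploiting the fact that an $\epsilon$-WSNE with $\epsilon < 1$ must place \emph{positive probability only on $\epsilon$-best responses} — a requirement that remains nontrivial even for large $\epsilon$, unlike the regret condition. The key idea is to hide, in each row, a single column that is ``good'' for the column player and simultaneously ``bad'' for the row player, so that unless the algorithm has queried enough entries to locate these hidden columns, the adversary retains the freedom to force a constraint violation.

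Concretely, I would have the adversary answer every query $(i,j)$ with the pair $(1,0)$. After the algorithm has made fewer than $k-1$ queries and output a mixed profile $(\x,\y)$, there is some row $i^\ast$ that received no query at all (since $k-1$ queries can touch at most $k-1$ of the $k$ rows). Now I complete the game as follows: pick any column $c$ with $\y_c < 1$ — this exists because $\y$ is a probability distribution over $k \ge 2$ columns, so it cannot put all its mass on a single column unless $k = 1$, and the statement is vacuous for $k = 1$. Actually, to be safe I want a column in $\supp(\x)$-independent way: choose $c$ to be any column, set $R_{i^\ast, c} = 0$ and $C_{i^\ast, c} = 1$ for the unqueried entry $(i^\ast, c)$, and set every other unqueried entry to $(1,0)$, consistent with the adversary's earlier answers. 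This yields a fully-defined zero-one game $(R,C)$ that is consistent with every answer the algorithm received.

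The heart of the argument is then: in this completed game, row $i^\ast$ has payoff $R_{i^\ast,j} = 1$ for every $j \ne c$ and $R_{i^\ast, c} = 0$, while for every other row $i$, we have $R_{i,j} = 1$ for all $j$, so row $i$ has payoff exactly $1$ against \emph{any} column strategy. Hence against $\y$, every row $i \ne i^\ast$ achieves payoff $1$, which is the maximum possible, so the row player's best-response payoff against $\y$ is $1$. Meanwhile, row $i^\ast$ achieves payoff $R_{i^\ast} \cdot \y = 1 - \y_c$. For $i^\ast$ to be an $\epsilon$-best response against $\y$ we need $1 - \y_c \ge 1 - \epsilon$, i.e.\ $\y_c \le \epsilon$. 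The main obstacle is that the adversary does not get to choose $\y$ and must cope with whatever $(\x,\y)$ the algorithm outputs; so the trick is to make $i^\ast$ a row the algorithm is \emph{forced} to put probability on. I would handle this by a symmetric construction on the column side: since fewer than $k-1$ queries also leave a column $j^\ast$ unqueried, hide in column $j^\ast$ an all-ones column for the column player (consistent with answers, which were all $(1,0)$ — wait, this conflicts).

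Given the conflict, the cleaner route — and the one I expect the proof to take — is to instead argue directly on the row player: the adversary reveals answers $(1,0)$, but completes the game so that the unqueried row $i^\ast$ is made \emph{strictly dominant} for the row player unless the column player cooperates, forcing $\x_{i^\ast}$ to be large in any WSNE, and then placing the single bad entry $(i^\ast, c)$ with $c$ chosen \emph{after} seeing $\y$ so that $\y_c$ is as large as possible; since $\y$ has $k$ coordinates summing to $1$, some coordinate is at least $1/k$, but for a fixed $\epsilon < 1$ and large $k$ that bound $1/k$ is too weak. The genuinely hard part, therefore, is arranging that the algorithm is forced to put probability on $i^\ast$ \emph{and} that the hidden column gets probability exceeding $\epsilon$; I expect the paper resolves this by making \emph{column} $j^\ast$ the all-ones-for-the-row-player column (answering consistently by having the adversary respond $(0,1)$ to queries instead, say in the columns, and reserving $i^\ast$ as the row where $R_{i^\ast,\cdot}=0$ except at queried spots). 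I would structure the final write-up around: (i)~identify an unqueried row $i^\ast$; (ii)~complete the game so row $i^\ast$ pays $0$ everywhere except where forced otherwise, and every other row pays $1$ everywhere, so in any WSNE $\x_{i^\ast} = 0$ is required, yet by a symmetric column construction $\supp(\y)$ is forced onto a column where row $i^\ast$ would have needed support — deriving the contradiction. The arithmetic is routine once the game is set up; pinning down the completion that simultaneously defeats all possible outputs $(\x,\y)$ is the only real difficulty, and it mirrors the $\frac{k}{2}$-candidate-columns idea of Section~\ref{sec:epsdlower} scaled down to a single hidden coordinate.
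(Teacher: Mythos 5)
There is a genuine gap here, and you in fact flag it yourself several times (``wait, this conflicts'', ``that bound $1/k$ is too weak'', ``I expect the paper resolves this by\dots''): you never close the argument. The reason your constructions keep failing is that you are trying to force the \emph{output} strategies to do something specific --- to put weight on the unqueried row $i^\ast$, or to put more than $\epsilon$ of $\y$'s mass on a single hidden column. For a constant $\epsilon$ close to $1$ neither can be forced: the algorithm can simply avoid $i^\ast$, and $\y$ can spread its mass so that no column gets more than $1/k$. The hidden-column idea from Section~\ref{sec:epsdlower} does not scale down to this lemma.

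The paper's argument runs in the opposite direction and needs no such forcing. The adversary answers every query with $(0,0)$ --- not $(1,0)$; this is essential, because it leaves the adversary free to later \emph{raise} unqueried entries above all queried ones. Now observe that at least $k-1$ rows are played with probability strictly less than $1$ by $\x$. Suppose one such row $r$ contains no queried entry (which happens if fewer than $k-1$ queries were made, since then at least two rows are untouched and at most one row can carry probability $1$). The adversary sets $R_{r,j}=1$ for all $j$ and every other unqueried entry to $0$, consistently with all answers. Then row $r$ earns $1$ against any $\y$, while every other row earns $0$; since $\x_r<1$, some row $r'\ne r$ lies in $\supp(\x)$, and $r'$ falls short of the best response by $1>\epsilon$, violating well-supportedness. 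So the unqueried row is made the unattainable \emph{best} response, and the violation is committed by whatever rows the algorithm actually supports --- no control over $(\x,\y)$ is needed. This yields that each of the $\ge k-1$ rows played with probability less than $1$ must contain a query, giving the bound.
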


In Sections~\ref{sec:enbm} and~\ref{sec:wsnebm}, we adapt two polynomial-time
approximation algorithms to give randomized query-efficient protocols. However,
in both cases, we do not use the best possible polynomial-time approximations.
There is a polynomial time algorithm that finds a $0.3393$-Nash
equilibrium~\cite{TS08} using a gradient descent method, but it is not at all
clear how this method could be implemented in a query efficient manner. For
WSNE, there is a polynomial-time algorithm for finding a $(\frac{2}{3} -
0.004735)$-WSNE~\cite{FGSS12}, which uses an \emph{exhaustive} search of all $2
\times 2$ supports in order to improve the KS-algorithm. Even if this search
could be implemented in a query efficient manner, this would only give a tiny
improvement over our result in Section~\ref{sec:wsnebm}.

\paragraph{\bf Acknowledgement} We would like to thank Paul Goldberg and Noam
Nisan for useful discussions on this topic.

\bibliographystyle{acmsmall}
\bibliography{references}

\appendix
\newpage

\section{A linear lower bound for finding $\epsilon$-WSNE when $\epsilon < 1$}
\label{app:easy}

In this section, we show a linear lower bound for finding any non-trivial
well-supported Nash equilibrium in a zero-one game. More precisely, we will
show that, for any $\epsilon < 1$, all algorithms must make at least $k-1$
payoff queries in order to find an $\epsilon$-WSNE in a $k \times k$ bimatrix
game. 

To prove our result, we will assume that all payoff queries return payoff~$0$
for both the row and column player. We will show that, when all queries are
responded to in this way, all algorithms must make at least $k-1$ payoff
queries in order to correctly determine an $\epsilon$-WSNE. We first show the
following lemma.

\begin{lemma}
\label{lem:wsnequery}
Let $(\x, \y)$ be a $\epsilon$-WSNE. Let $r$ be a row that is played with
probability strictly less than $1$ in $\x$. At most $\epsilon$ probability can
be assigned to columns in $r$ that have not been queried.
\end{lemma}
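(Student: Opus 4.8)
Let $(\x, \y)$ be an $\epsilon$-WSNE in a $k \times k$ bimatrix game in which every payoff query has been answered with $(0,0)$. Let $r$ be a row that is played with probability strictly less than $1$ in $\x$. Then at most $\epsilon$ probability is assigned by $\y$ to columns $j$ for which the entry $(r,j)$ has not been queried.

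The plan is to use the defining property of an $\epsilon$-WSNE together with the fact that all queried entries are $0$. First I would observe that since $\x_r < 1$, there is some other row $r'$ with $\x_{r'} > 0$, so $r'$ lies in $\supp(\x)$; hence $r'$ must be an $\epsilon$-best response against $\y$, giving $R_{r'} \cdot \y \ge \max_{i} (R_i \cdot \y) - \epsilon \ge R_r \cdot \y - \epsilon$. Next I would bound $R_{r'} \cdot \y$ from above: every entry in row $r'$ is either a queried entry, which equals $0$, or an unqueried entry, which lies in $[0,1]$; so $R_{r'} \cdot \y \le \sum_{j : (r',j) \text{ unqueried}} \y_j \le 1$. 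That alone is not quite enough, so the real work is to lower-bound $R_r \cdot \y$ in terms of the probability $\y$ places on unqueried columns of $r$.

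The key step is the following: let $U_r = \{ j : (r,j) \text{ has not been queried}\}$ be the set of unqueried columns in row $r$, and let $p = \sum_{j \in U_r} \y_j$ be the probability $\y$ assigns to them. Here I would want the adversary's extension of the game to be chosen so that unqueried entries of row $r$ are $1$ (this is the natural worst case the lemma is implicitly quantifying over, analogous to the extension used in Section~\ref{sec:epsdlower}); then $R_r \cdot \y \ge \sum_{j \in U_r} \y_j \cdot 1 = p$, since the queried entries contribute $0 \ge 0$. Combining the three bounds gives $p \le R_r \cdot \y \le \max_i (R_i \cdot \y) \le R_{r'} \cdot \y + \epsilon$. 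I expect the main obstacle to be pinning down exactly which game $(R,C)$ the statement refers to — the lemma is a step in an adversary argument, so $R_{r'}\cdot\y$ should really be bounded directly by the adversary choosing unqueried entries of $r'$ to be $0$ (i.e. $C$-entries, or simply bounding payoffs by the range $[0,1]$), yielding $R_{r'}\cdot\y \le 1 - p'$ type statements; the cleanest route is simply: $p \le R_r\cdot\y$, and since $r$ is not a best response forced to get full payoff, the WSNE condition forces $p - \epsilon \le R_{r'}\cdot\y$, and an upper bound $R_{r'}\cdot\y \le \ldots$ is not even needed — the contradiction we want is that if $p > \epsilon$ then row $r$ strictly beats $r'$ by more than $\epsilon$, so $r'$ cannot be an $\epsilon$-best response unless $r \notin \supp(\x)$...

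Cleaning this up: the argument I would actually write is a short contradiction. Suppose $p > \epsilon$. Since unqueried entries of row $r$ can be $1$ while all queried entries are $0$, we get $R_r \cdot \y \ge p > \epsilon \ge R_{r'} \cdot \y + \epsilon$ only if $R_{r'}\cdot\y < p - \epsilon$; but in fact every entry of row $r'$ is at most $1$ and the queried ones are $0$, and one can always arrange the extension so that $R_{r'}\cdot\y$ is small — actually the safe and self-contained version is: $R_r\cdot\y \ge p$ and $R_{r'}\cdot\y \le 1 - (\text{probability }\y\text{ puts on queried columns of }r')$. Rather than chase the asymmetry, I would conclude directly: $r$ has payoff at least $p$ against $\y$, so $\max_i R_i\cdot\y \ge p$; for $r'\in\supp(\x)$ to be an $\epsilon$-best response we need $R_{r'}\cdot\y \ge p - \epsilon$; but since all queried entries of $r'$ are $0$, $R_{r'}\cdot\y$ equals the $\y$-mass on unqueried columns of $r'$ weighted by values in $[0,1]$, which is at most that mass — and by choosing the extension adversarially (unqueried entries of every row other than $r$ set to $0$ where needed) this is forced below $p - \epsilon$ when $p > \epsilon$, a contradiction. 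Hence $p \le \epsilon$, which is exactly the claim.
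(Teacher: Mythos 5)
Your final cleaned-up argument is correct and is essentially the paper's proof: complete the game adversarially by setting the unqueried entries of row $r$ to $1$ and all other entries to $0$ (consistent with every query, since all queries returned $0$), so that $R_r \cdot \y > \epsilon$ while every other row has payoff $0$, and then observe that since $\x_r < 1$ some row $r' \ne r$ lies in $\supp(\x)$ and is therefore not an $\epsilon$-best response against $\y$. The meandering in the middle about how to upper-bound $R_{r'} \cdot \y$ in general is unnecessary once you commit to that single adversarial completion, but the argument you settle on is exactly the paper's.
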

\begin{proof}
Suppose, for the sake of contradiction, that $\x$ assigns strictly more than
$\epsilon$ probability to unqueried columns in $r$. Let $U = \{c \; : \; (r, c)
\text{ did not receive a payoff query}\}$. We construct a row player payoff
matrix $R$ as follows:
\begin{equation*}
R_{i,j} = \begin{cases}
1  & \text{if $i = r$ and $j \in U$,}\\
0 & \text{otherwise.}
\end{cases}
\end{equation*}
This matrix is consistent with all payoff queries that have been made so far.
Since $\x$ assigns strictly more than $\epsilon$ probability to the columns in
$U$, the payoff of row $r$ is strictly greater than $\epsilon$. Moreover, the
payoff of every row $i \ne r$ is $0$. Since $r$ is not played with probability
$1$ by $\x$, some probability must be assigned to a row $r'$ with payoff $0$.
Therefore, we have:
\begin{equation*}
R_{r'} \cdot \y - R_r \cdot \y > \epsilon - 0.
\end{equation*} 
This proves that row $r$ is not an $\epsilon$-best response against
$\y$, which provides our contradiction. \qed
\end{proof}

Having shown Lemma~\ref{lem:wsnequery}, we can now provide the proof of
Lemma~\ref{lem:shitty}.

\begin{proof}[of Lemma~\ref{lem:shitty}]
Let $(\x, \y)$ be an $\epsilon$-WSNE. Let $W$ be the set of rows that are not
played with probability $1$ by $\x$. Note that $W$ contains at least $k-1$
rows. By Lemma~\ref{lem:wsnequery}, in each row in $W$, there must be at least
$1 - \epsilon$ probability assigned to queried columns. Since $\epsilon < 1$,
this implies that each row in $W$ must have at least one queried column,  which
implies that we must have made at least $k-1$ queries. \qed
\end{proof}

\end{document}